\documentclass[reqno]{amsart}
\usepackage[foot]{amsaddr}
\usepackage[margin=3cm]{geometry}
\usepackage{amsmath, amssymb,amsthm}
\usepackage[shortlabels]{enumitem}
\usepackage{mlmodern}
\DeclareSymbolFont{largesymbols}{OMX}{cmex}{m}{n} 
\usepackage{mathtools} 
\usepackage{stmaryrd} 
\usepackage{xcolor}
\usepackage{upref} 
\usepackage[colorlinks]{hyperref}
\hypersetup{citecolor=blue,filecolor=blue,linkcolor=blue,urlcolor=navyblue}
\definecolor{navyblue}{rgb}{0.0, 0.0, 0.5}

\newtheorem{thm}{Theorem}[section]
\newtheorem{prop}[thm]{Proposition}
\newtheorem{lem}[thm]{Lemma}
\newtheorem{cor}[thm]{Corollary}
\theoremstyle{definition}

\newtheorem{rmk}{Remark}[section]

\newtheorem*{claim*}{Claim}

\newcommand{\N}{\mathbb{N}}
\newcommand{\Z}{\mathbb{Z}}
\newcommand{\R}{\mathbb{R}}
\newcommand{\C}{\mathbb{C}}

\renewcommand{\Im}{\operatorname{\mathrm{Im}}}
\renewcommand{\Re}{\operatorname{\mathrm{Re}}}

\newcommand{\oneb}{\mathbf{1}}

\newcommand{\Tr}{\operatorname{Tr}}

\newcommand{\hs}{\mathrm{hs}}
\newcommand{\id}{\operatorname{id}}

\newcommand{\U}{\mathrm{U}}

\newcommand{\op}[1]{\operatorname{#1}}

\newcommand{\rank}{\op{rank}}
\newcommand{\supp}{\op{supp}}

\newcommand{\Hmin}{H_\mathrm{min}}
\newcommand{\Hpmin}{H_{p,\mathrm{min}}}
\newcommand{\SL}{\mathrm{SL}}
\newcommand{\s}{\mathbf s}
\renewcommand{\t}{\mathbf t}
\newcommand{\gap}{\mathfrak g}
\newcommand{\free}{\mathrm{free}}

\newcommand\numberthis{\stepcounter{equation}\tag{\theequation}}
\numberwithin{equation}{section}

\begin{document}

\title[]{A constructive violation of additivity of minimum output von Neumann entropy}

\author{Laura Shou$^{1,2}$}
\author{Alexey V. Gorshkov$^{1,2}$}

\address{\normalfont$^1$Joint Quantum Institute, Department of Physics, NIST/University of Maryland, College Park, MD 20742, USA}
\address{\normalfont$^2$Joint Center for Quantum Information and Computer Science,
NIST/University of Maryland, College Park, MD, 20742, USA}

\begin{abstract}
We give an explicit non-random example of nonadditivity of minimum output von Neumann entropy. The proof uses a finite-dimensional construction which imitates free Haar unitary behavior. The same channel gives a violation of additivity for the minimum output R\'enyi-$p$ entropy for any $p\in[1,\infty]$. Additionally, given any $p_0>0$ and $\gap>0$, we extend the construction to obtain an explicit non-random channel which has entropy gap $\ge\gap$ uniformly over $p\in[p_0,\infty]$.
\end{abstract}

\maketitle

\setcounter{tocdepth}{1}
\tableofcontents

\section{Introduction}

For a quantum channel $\Phi$, its minimum output entropy (MOE) is 
\begin{align}
\Hmin(\Phi):=\min_\rho H(\Phi(\rho)),
\end{align}
for\footnote{All logarithms are natural logarithms.} $H(\rho):=-\Tr(\rho\log\rho)$ the von Neumann entropy of a density matrix $\rho$.
While one always has $\Hmin(\Phi_1\otimes\Phi_2)\le\Hmin(\Phi_1)+\Hmin(\Phi_2)$ \cite{shor2004equivalence}, it has been of significant interest to construct channels which violate additivity, i.e. which satisfy
\begin{align}
\Hmin(\Phi_1\otimes\Phi_2)<\Hmin(\Phi_1)+\Hmin(\Phi_2),
\end{align}
meaning that an entangled input can produce lower von Neumann entropy than any product input.
Similarly, one can also ask the question for the minimum output R\'enyi-$p$ entropies, defined as $\Hpmin(\Phi):=\min_\rho H_p(\Phi(\rho))$ for $H_p(\sigma):=\frac{1}{1-p}\log\Tr(\sigma^p)$ for $p\in(0,\infty)$, with limits taken for the von Neumann entropy $p=1$ and for $p=0$ and $\infty$.

The first counterexample to additivity of minimum output R\'enyi-$p$ entropy was an explicit construction given for $p>4.79$ in \cite{werner2002counterexample}.
Some other early counterexamples were random, such as ones given for $p>1$ in \cite{hayden2008counterexamples}, and for $p$ near 0 in \cite{cubitt2008counterexamples}. 
The additivity question for the von Neumann entropy at $p=1$ remained of particular interest due to its equivalence to the question of additivity of Holevo capacity \cite{shor2004equivalence}.
Additivity for the minimum output von Neumann entropy was disproved by Hastings in \cite{hastings2009superadditivity}, using random Haar unitaries to construct counterexamples.
Since then, there has been much work concerning random counterexamples, including the introduction of free probability methods to study the minimum output entropy problem \cite{collins2011random,belinschi2012eigenvectors,belinschi2016almost}.
Many recent works continue to utilize free probability methods, including for establishing random counterexamples for general values of $p$, including $0\le p<1/4$ and $p>3/4$ in \cite{leung2026counterexamples}, and for all $p\ge0$ in \cite{zhen2026almost}.

Despite the abundance of random constructions of violations of MOE additivity, it has proven more difficult to write down explicit non-random examples.
Explicit constructions for $p>2$ were given in \cite{grudka2010constructive}, and extended to all $p>1$ in \cite{derksen2025constructive}.
However, explicit constructions for the von Neumann entropy case $p=1$ remained elusive. Recent works \cite{lovitz2026superadditivity,zhen2026deterministic} demonstrated deterministic polynomial-time algorithms for constructing violations of minimum output von Neumann entropy, using the derandomization of \cite{odonnell2020explicit}, but left open the problem of constructing a specific closed-form example.
In this paper, we resolve this problem by constructing an explicit non-random example of nonadditivity of minimum output von Neumann entropy. 
The proof, while constructing a non-random example, makes use of MOE developments based on free probability, particularly \cite{collins2018haagerup}, which demonstrated how Haagerup's inequality \cite{haagerup1978example} plays a key role in the MOE additivity problem. By imitating certain free properties in a non-random finite-dimensional channel, we will be able to construct a desired non-random counterexample.
Our main result is as follows.

\begin{thm}\label{thm:main}
Let $k=2^{32}$ and $d=3\cdot 2^{5\,268\,390\,106\,224\,805\,281\,791\,998}$. 
There are explicit unitaries $U_1,\ldots,U_k\in \U(d)$ and a hermitian $F\in M_d(\C)$, all written in Section~\ref{sec:construct}, such that the channel $\Phi:M_d(\C)\to M_k(\C)$ defined by
\begin{align}\label{eqn:phi}
[\Phi(\rho)]_{ij}&:=\frac1k\Tr(U_iF\rho FU_j^*)+\frac{\delta_{ij}}{k}\Tr((I_d-F^2)\rho),\quad i,j=1,\ldots,k
\end{align}
satisfies
\begin{align}\label{eqn:gap}
2\Hmin(\Phi)-\Hmin(\Phi\otimes\Phi)&>\frac{3}{17\,179\,869\,184}.
\end{align}
\end{thm}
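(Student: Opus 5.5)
Our plan follows the Hastings / Collins--Fukuda--Zhong template. We bound $\Hmin(\Phi\otimes\Phi)$ from above using a single entangled test input. We bound $\Hmin(\Phi)$ from below uniformly over all inputs, through an operator-norm estimate that plays the role of the Haagerup-inequality bound of \cite{collins2018haagerup}, but holds deterministically for the explicit $U_i$.

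Write $\Phi(\rho)=\frac1k\,[\,\Tr(U_iF\rho FU_j^*)\,]_{ij}+\frac1k\Tr((I-F^2)\rho)\,I_k$. For a pure input $\rho=|x\rangle\langle x|$ the first term is the Gram matrix $\frac1k G$ of the vectors $v_i=U_iFx$. It is unitarily similar to $\frac1k\sum_i U_iF|x\rangle\langle x|FU_i^*$, which is $\frac1k$ times an operator $\Psi(|x\rangle\langle x|)$ on $\C^d$.

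\textbf{Step 1 (lower bound for one copy).} We show that for every unit $x$ the largest eigenvalue of $\Phi(\rho)$ is at most $\frac{c}{\sqrt k}$ for an explicit $c$ (around $2$ or $3$), and that the spectrum is otherwise spread out. The key estimate is on the off-diagonal Gram entries $\langle U_jFx,U_iFx\rangle$: we need $\|\sum_{i} \alpha_i U_i^* U_j\|$-type operator norms to be of order $\sqrt k$ rather than $k$, as for free Haar unitaries. Our choice is to build the $U_i$ as explicit tensor or Clifford-type and permutation unitaries on $\C^3\otimes(\C^2)^{\otimes N}$ that generate an approximately free family: words of length $\le L$ in the $U_i^*U_j$ should have tiny normalized trace. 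The enormous $d$ is what makes this possible. From trace bounds on words of length $2m$ with $m\sim\log d$, the moment method then gives the norm bound $\|\sum_i \alpha_iU_i\|\le (2\sqrt{k-1}+\epsilon)\|\alpha\|_2$, a Haagerup/Kesten-type estimate. This yields $\lambda_{\max}(\Phi(\rho))\lesssim 2/\sqrt k+ (\text{contribution of }F)$. Combining it with the trace constraint, a majorization argument gives $H(\Phi(\rho))\ge \log k - O(1/\sqrt k)$-type bounds, i.e. $\Hmin(\Phi)\ge \log k-\epsilon_1$ with explicit $\epsilon_1$. Mixed inputs reduce to pure ones by concavity. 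The role of $F$ (a projection-like hermitian operator with $\|F\|\le1$) is to make the channel ``partially depolarizing''. This keeps single outputs near maximally mixed, while still allowing a large correlated eigenvalue on two copies.

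\textbf{Step 2 (upper bound for two copies).} We use the maximally entangled state on the range of $F\otimes \bar F$, suitably normalized. In $(\Phi\otimes\bar\Phi)$ (for $\Phi\otimes\Phi$ we choose $U_i$, $F$ with real entries so that $\bar\Phi=\Phi$), the output has the diagonal overlap $\langle \beta|\sum_i U_i\otimes \bar U_i\cdots|\beta\rangle$. This produces an eigenvalue of size at least $\frac1k\cdot(\Tr F^2/d)^2$ on the vector $\sum_i e_i\otimes e_i/\sqrt k$, and the rest of the spectrum is controlled crudely. Convexity bounds then give $H\le 2\log k - \frac{c'\log k}{k}$.

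\textbf{Step 3.} We subtract the two estimates and plug in $k=2^{32}$. With $\frac{c'\log k}{k}$ against $2\epsilon_1$, the margin should yield a positive quantity, and we would tune constants to obtain $3/2^{34}$. We expect the main obstacle to be Step 1. Proving a deterministic Haagerup-type operator-norm bound for explicit unitaries with fully explicit error terms requires controlling traces of all reduced words of length up to $\sim\log d$. This is where the construction must be designed so that word traces vanish exactly below a length threshold, and that threshold is what dictates the astronomically large $d$.
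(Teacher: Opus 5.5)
Your proposal has a genuine gap in Step 1, and the single-copy estimate is also at the wrong scale.

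\textbf{Step 1: no Haagerup-type norm bound is available.} You want a deterministic Haagerup/Kesten-type operator-norm bound for the explicit finite-dimensional unitaries, obtained by the moment method from exactly vanishing word traces. This cannot work.

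The moment method only gives $\|T\|\le d^{1/(2m)}\tau_d(T^{2m})^{1/(2m)}$. Suppose $\tau_d$ vanishes on all nontrivial reduced words of length $\le 4m$. Then the reduced words of length $2m$ are orthonormal for $\langle X,Y\rangle=\tau_d(X^*Y)$. Hence $k^{2m}\le d^2$ and $d^{1/(2m)}\ge\sqrt k$, so the best you can extract is $\|T_A\|\lesssim 3\sqrt k$, not $O(1)$. (In the paper's construction, $\log_2 d\approx 816\,m$.)

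Worse, for permutation unitaries the bound is false. The all-ones vector is fixed by every $U_i$, so $T_A=\sum_{i\ne j}A_{ij}U_i^*U_j$ has eigenvalue $\sum_{i\ne j}A_{ij}=\sqrt{k(k-1)}$ there, when $A$ has constant off-diagonal entries and $\|A\|_\hs=1$. Moments can only show that the subspace where $T_A$ is large has small normalized dimension. Explicit strong convergence to free Haar unitaries is open, and the paper avoids needing it.

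\textbf{The missing idea is the damping operator.} The paper sets $F=(I+R)^{-1}$ with $R=\sum_q (T_{A_q}/L)^{2m}$, summed over an explicit $\varepsilon$-net of hermitian zero-diagonal $A$ with $\|A\|_\hs=1$. Since $|t|\le 1+t^{2m}$, one has $-L(I+R)\le T_{A_q}\le L(I+R)$, hence $\|FT_{A_q}F\|\le L$ deterministically. The moments are used only to control the trace: $\tau_d(F^2)\ge 1-2\tau_d(R)\ge 1-2(2\eta+1)^r(3/L)^{2m}$.

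That trace bound is exactly what your Step 2 needs, namely $\Tr F^2/d$ close to $1$. For $k=2^{32}$ one needs $\tau_d(F^2)^2\gtrsim 0.95$. Your proposal gives no mechanism for this. A ``projection-like'' $F$ that depolarizes a fixed fraction would shrink the Bell eigenvalue by $\tau_d(F^2)^2$ and kill the gap.

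\textbf{Step 1 is also at the wrong scale.} Even granting a norm bound, the Bell state lowers the two-copy entropy only by $(c_1\log k-1)/k$. So each single-copy deficit must be $O(1/k)$, with constant below $(c_1\log k-1)/2\approx 10.5$.

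A bound $\lambda_{\max}(\Phi(\rho))\le c/\sqrt k$ (or even $c/k$) with majorization only gives $H(\Phi(\rho))\ge\tfrac12\log k-\log c$ (respectively $\log k-\log c$). A deficit of order $1/\sqrt k$ is also far larger than $\log k/k$, so your Step~3 subtraction fails. Note too that the free bound for length-one words is $\|\sum_i\alpha_iu_i\|\le 2\|\alpha\|_2$; the value $2\sqrt{k-1}$ is the norm of the unnormalized sum $\sum_iu_i$.

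The paper instead controls the Hilbert--Schmidt distance. Since $D=\Phi(\rho)-I_k/k$ has zero diagonal, $\|D\|_\hs=\max_A\Tr(AD)$ over the same set of $A$. Moreover $\Tr(AD)=\frac1k\Tr(\rho FT_AF)$, which involves the length-two words $U_i^*U_j$, where Haagerup with $n=2$ gives constant $3$. The net argument yields $\|D\|_\hs\le L/(k(1-\varepsilon))$. Combined with $H(\sigma)\ge\log k-k\|\sigma-I_k/k\|_\hs^2$, this gives a deficit of $L^2/(k(1-\varepsilon)^2)$. One gets $2L^2/(1-\varepsilon)^2<c_1\log k-1$ at $k=2^{32}$ only because $L=25/8$, $\varepsilon\le1/64$ and $c_1\ge 99/100$.

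Finally, ``tensor or Clifford-type'' unitaries are not yet a construction. The paper uses the left-regular representation of the Sanov-derived generators $a_j=P^jQP^{-j}$ on $\SL_2(\Z/M\Z)$ with $M=(16k^2)^{4m}$. This gives exactly $\tau_d(T_A^{2m})=\tau_{\mathrm{free}}(T_{A,\mathrm{free}}^{2m})\le 3^{2m}$.
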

\begin{rmk}
\begin{enumerate}[(i)]
\item Due to the dimension $d$, the construction would require around 5 septillion input qubits. 
We do not attempt to reduce the Hilbert space dimensions or increase the entropy gap in Theorem~\ref{thm:main}; instead we try to keep the proof as simple as possible.

\item The proof works by imitating certain aspects of free Haar unitary behavior (see Section~\ref{sec:prelim}), without using strong convergence. Since many random constructions in the MOE literature make use of freeness and strong convergence, the methods here may be applicable to other constructions as well.

\end{enumerate}

\end{rmk}

As a corollary of some of the ingredients used to prove Theorem~\ref{thm:main}, we also obtain
\begin{cor}\label{cor:p}
The channel $\Phi$ defined in \eqref{eqn:phi} violates additivity of minimum output R\'enyi-$p$ entropy for all $1\le p\le\infty$, with a uniform lower bound $\frac{3}{17\,179\,869\,184}$ on the gap.
\end{cor}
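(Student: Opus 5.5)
Corollary~\ref{cor:p} will follow from two one-sided bounds that are uniform in $p$, combined through the monotonicity of Rényi entropies in $p$.

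\textbf{Plan.} We need
\[
2\Hpmin(\Phi)-\Hpmin(\Phi\otimes\Phi)>\frac{3}{17\,179\,869\,184}\quad\text{for every } p\in[1,\infty].
\]
The gap in Theorem~\ref{thm:main} should come from pairing an upper bound on the tensor product with a lower bound on the single channel. The intended ingredients are:
\begin{enumerate}[(a)]
\item \emph{Upper bound on $\Phi\otimes\Phi$.} Feed in the maximally entangled state $\omega$ (with conjugate pairing, as in Hastings-type arguments). Then $(\Phi\otimes\Phi)(\omega)$ has a large eigenvalue $\lambda_{\max}$, at least about $\frac1k\bigl|\frac1d\Tr F^2\bigr|^2$ plus corrections. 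This gives
\[
\Hpmin(\Phi\otimes\Phi)\le H_p\bigl((\Phi\otimes\Phi)(\omega)\bigr).
\]
\item \emph{Lower bound on $\Phi$.} For every input $\rho$, the output $\Phi(\rho)$ is close to flat. The natural quantity to control is the operator norm $\|\Phi(\rho)\|_\infty$, since that is what a Haagerup-type inequality bounds. This yields a lower bound on $H_\infty(\Phi(\rho))$.
\end{enumerate}

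Since $H_p$ is nonincreasing in $p$, we have $H_p\ge H_\infty$ for all $p$. So the single-channel bound in (b), stated at $p=\infty$, gives $\Hpmin(\Phi)\ge H_{\infty,\min}(\Phi)$ for all $p\in[1,\infty]$.

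For the product channel we need an upper bound that holds at $p=1$, since $H_p\le H_1$ for all $p\ge1$. The proof of Theorem~\ref{thm:main} necessarily bounds $H\bigl((\Phi\otimes\Phi)(\omega)\bigr)$ at $p=1$, because von Neumann entropy is the quantity being compared there. That same bound then controls $\Hpmin(\Phi\otimes\Phi)$ for every $p\ge1$.

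Combining the two directions, for every $p\in[1,\infty]$:
\[
2\Hpmin(\Phi)-\Hpmin(\Phi\otimes\Phi)\;\ge\; 2H_{\infty,\min}(\Phi)-H\bigl((\Phi\otimes\Phi)(\omega)\bigr).
\]
The right-hand side is exactly the quantity the main proof shows exceeds $\frac{3}{17\,179\,869\,184}$, provided the main proof indeed routes its single-channel bound through $H_\infty$.

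\textbf{Main obstacle.} The crux is to check that the lower bound on $\Hmin(\Phi)$ in the proof of Theorem~\ref{thm:main} is obtained from an operator-norm bound, not from a genuinely von Neumann-specific estimate.
\begin{itemize}
\item If it is a von Neumann-specific estimate, monotonicity in $p$ no longer transfers the single-channel bound to all $p$ for free.
\item In that case I would first try to redo the single-channel estimate via $\|\Phi(\rho)\|_\infty$, giving the explicit bound $H_\infty(\Phi(\rho))\ge-\log\sup_\rho\|\Phi(\rho)\|_\infty$.
\item I would then verify that the numerical margin still exceeds $\frac{3}{17\,179\,869\,184}$ with this weaker bound in place of the von Neumann one.
\end{itemize}

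I expect the explicit constants to have been chosen so that the operator-norm bound already suffices. In that case the corollary is immediate from the ingredients of the main theorem.
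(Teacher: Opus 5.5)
\textbf{The approach fails, at exactly the numerical check you flagged.} Your inequality $2\Hpmin(\Phi)-\Hpmin(\Phi\otimes\Phi)\ge 2H_{\infty,\min}(\Phi)-H((\Phi\otimes\Phi)(\omega))$ is valid, but its right-hand side is negative. The violation here is tiny. At $p=1$ the Bell-state output sits only $(c\log k-1)/k\approx 4.9\cdot 10^{-9}$ below $2\log k$ (with $c=\tfrac{99}{100}$, $k=2^{32}$). The main proof can afford this because its single-channel bound loses only $O(1/k)$.

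That bound does not go through $H_\infty$. It uses the Hilbert--Schmidt estimate $\|\Phi(\rho)-I_k/k\|_\hs\le C_1/k$ with $C_1=\tfrac{200}{63}$, fed into $H(\sigma)\ge\log k-k\|\sigma-I_k/k\|_\hs^2$, which is an $H_2$-type bound. The best operator-norm bound you can extract from it is $\|\Phi(\rho)\|\le\frac1k+\|\Phi(\rho)-I_k/k\|_\hs\le\frac{1+C_1}{k}$. This gives $2H_{\infty,\min}(\Phi)\ge 2\log k-2\log\tfrac{263}{63}\approx 2\log k-2.86$, so your lower bound on the gap is about $-2.86$.

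This is not a matter of constants. A bound $\|\Phi(\rho)\|\le C/k$ costs $\log C$, and Haagerup-type estimates only give some absolute $C>1$. For the free analogue of the channel one has $\langle y|\Phi(|x\rangle\langle x|)|y\rangle=\frac1k\|\sum_i\bar y_iu_ix\|^2$. Taking $y$ uniform and using $\|\sum_iu_i\|=2\sqrt{k-1}$, this gets close to $4/k$, so one expects $H_{\infty,\min}(\Phi)\approx\log k-\log 4$. An $O(1)$ loss cannot be beaten by an $O(\log k/k)$ gain.

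\textbf{The missing idea is to match the $p$-dependence on the two sides.} You pushed monotonicity in $p$ in opposite directions; the paper instead pairs bounds of comparable strength at each $p$.

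For $1\le p\le 2$, pair $H_p(\Phi(\rho))\ge H_2(\Phi(\rho))\ge\log k-\log(1+C_1^2/k)$, which loses at most $C_1^2/k$, with $H_p(\sigma_{12})\le H_1(\sigma_{12})\le 2\log k-(c\log k-1)/k$. The difference is at least $(c\log k-1-2C_1^2)/k>\frac{3}{4k}$.

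For $p\ge2$:
\begin{enumerate}[(i)]
\item On the single-channel side, interpolate via $\Tr\Phi(\rho)^p\le\|\Phi(\rho)\|^{p-2}\Tr\Phi(\rho)^2$. The loss becomes $\frac{1}{p-1}\log A+\frac{p-2}{p-1}\log B$, with $A=1+C_1^2/k$ and $B=1+C_1$. The constant loss $\log B$ now carries the weight $\frac{p-2}{p-1}$.
\item On the product side, the bound must improve with $p$ to pay for that loss. Since $\lambda_{\max}(\sigma_{12})\ge c/k$, one has $H_p(\sigma_{12})\le H_p(\xi)$ for $\xi=\bigl(\tfrac ck,\tfrac{1-c/k}{k^2-1},\ldots,\tfrac{1-c/k}{k^2-1}\bigr)$.
\item Convexity of $G(p)=(p-1)(2\log k-H_p(\xi))$ then gives $2\log k-H_p(\xi)\ge\frac{G(2)}{p-1}+\frac{p-2}{p-1}G'(2)$, with $G'(2)>10>2\log B$.
\end{enumerate}
This works because the entangled gain grows to order $\log k$ as $p\to\infty$ (indeed $H_\infty(\sigma_{12})\le\log k-\log c$). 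That large-$p$ regime is the only place where an operator-norm loss is affordable.
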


The main idea of the proof of Theorem~\ref{thm:main}, which is explained further in Section~\ref{subsec:proof-idea}, is to construct unitary (permutation) matrices $U_1,\ldots,U_k$ which behave like a freely independent Haar unitary family up to a fixed large trace moment. For free Haar unitary behavior, \cite{collins2018haagerup} showed how Haagerup's inequality \cite{haagerup1978example} can be used to obtain a violation of additivity of minimum output entropy. 
The construction of \cite{collins2018haagerup} uses strong convergence of a random matrix family to a free (infinite-dimensional) family as $n\to\infty$, and the resulting examples of nonadditivity are random.
In the non-random construction here, we do not use or have strong convergence to the free family, but we will still imitate the free behavior by matching normalized trace moments with those of free Haar unitaries up to a fixed finite order.
This will however not fully control non-free behavior, and so a damping term $F$ will be used to suppress the undesired behavior. Overall, the moment agreement and $F$ will be enough to ensure the behavior is close enough to free to obtain a violation of additivity.

\vspace{2mm}
Finally, while we aim to keep the proof of Theorem~\ref{thm:main} as simple as possible, we note that for any $\gap>0$ and $p_0>0$, we can use a slightly more complicated construction and estimates to construct an explicit channel which violates additivity of minimum output R\'enyi-$p$ entropy with an entropy gap $\ge\gap$ uniformly over all $p\ge p_0>0$:
\begin{thm}\label{thm:ext}
Fix $p_0>0$ and $\ell\in\N$.
Let $N=2^{17\ell}$ and $D=(3\cdot 2^{336m(\ell,p_0)-2})^{16\ell}$, where 
\begin{align}
m(\ell,p_0)=4+\left\lceil\max\left\{\frac{\ell+7}{\min(p_0,1/2)},17\ell+7\right\}\right\rceil+(17\ell+5)2^{34\ell-1},
\end{align}
which is $\Omega(\ell(p_0^{-1}+2^{34\ell}))$.
Then there are explicit unitaries $W_1,\ldots,W_N\in\U(D)$ and a hermitian $F\in M_D(\C)$, defined in Section~\ref{subsec:ext-construction}, such that the channel $\Psi:M_D(\C)\to M_N(\C)$ defined by
\begin{align}\label{eqn:phi2}
[\Psi(\rho)]_{\s\t}&=\frac1N\Tr(W_\s F\rho FW_\t^*)+\frac{\delta_{st}}{N}\Tr((I-F^2)\rho),
\end{align}
satisfies for every $p\in[p_0,\infty]$,
\begin{align}
2\Hpmin(\Psi)-\Hpmin(\Psi\otimes\Psi)>\left(\frac45\ell-6\right)\log2.
\end{align}
\end{thm}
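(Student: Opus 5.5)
The plan mirrors the proof of Theorem~\ref{thm:main}, but the single-letter lower bound and the entangled upper bound now have to hold for every Rényi-$p$ with $p\ge p_0$, with a gap of order $\ell\log 2$. Write $N=2^{17\ell}$. The target is an output dimension-to-gap ratio in which the single channel is nearly maximally mixing while the tensor square is not.

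\textbf{Upper bound on $\Hpmin(\Psi\otimes\Psi)$.} I use the Bell state $\omega$ on $\C^D\otimes\C^D$, which is the standard Hayden--Winter/Collins--Nechita input. The entry $\langle\omega|(\Psi\otimes\bar\Psi)(\omega)|\omega\rangle$ receives the diagonal contribution $\frac1{N^2}\sum_{\s}\frac1D\Tr(W_\s F^2W_\s^*)\cdot(\ldots)$. This gives a largest eigenvalue $\lambda_{\max}\gtrsim \frac{1}{N}(\tau(F^2))^2$. If $F$ is chosen with $\tau(F^2)$ close to $1$, say at least $1-2^{-\ell}$, then $H_\infty\le \log N+O(1)$. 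Since $H_p$ decreases in $p$, I would bound $H_p$ for $p\ge p_0$ via $H_p\le \frac{p}{p-1}H_\infty$ when $p>1$. For $p<1$ I would bound directly: one eigenvalue is $\approx1/N$ and the rest of the mass $\approx1$ is spread over $N^2$ dimensions, so $\Tr\sigma^p\le N^{-p}+N^{2(1-p)}$. This gives roughly $2\log N-\frac{p}{1-p}\cdot(\ldots)$. The $p_0$-dependence of $m(\ell,p_0)$ through $\frac{\ell+7}{\min(p_0,1/2)}$ is exactly what pays for this small-$p$ regime. The first cost is that $W_\s$ must in fact be conjugated consistently; I would use $\Psi\otimes\bar\Psi$ and note $\bar W_\s$ are again permutations, so $\bar\Psi=\Psi$.

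\textbf{Lower bound on $\Hpmin(\Psi)$.} This is the main step. Following the Theorem~\ref{thm:main} argument, I would show that for every pure input $\rho=|x\rangle\langle x|$,
\[
\|\Psi(\rho)\|_\infty\le \frac{c}{N^{1-\epsilon}}
\]
with $\epsilon$ a small fraction, for instance so that $H_\infty(\Psi(\rho))\ge(1-\frac1{5})\log N -O(1)$ (matching $\frac45\ell$ up to scaling). Since $H_p\ge H_\infty$ for all $p$, this bound is uniform in $p$. To get it, I would write $\Psi(\rho)=\frac1N G+\frac{\Tr((I-F^2)\rho)}{N}I$, where $G_{\s\t}=\langle W_\t Fx,W_\s Fx\rangle$, and bound the operator norm via $\|G\|\le\|\sum_\s \bar a_\s W_\s F\|^2$-type dualities. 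This reduces the problem to bounding $\|\sum_\s a_\s W_\s F\|$ over unit vectors $a$. The free-probability benchmark is Haagerup's inequality: for free Haar $u_\s$, $\|\sum a_\s u_\s\|\le 2\|a\|_2$ when restricted to words of length one. I would invoke the moment-matching property established for the construction in Section~\ref{subsec:ext-construction}: the $W_\s$ agree with free Haar unitaries in normalized trace moments up to order $2m$. Then
\[
\|T\|^{2m}\le D\,\tau\bigl((T^*T)^m\bigr)\approx D\,\tau_{\free}\bigl((T^*T)^m\bigr)\le D\,(C m\|a\|_2)^{2m}.
\]
The resulting factor $D^{1/2m}$ is $O(1)$ precisely because $D\approx 2^{336m\cdot16\ell}$ scales with $m$; this is why $D$ is exponential in $m$.

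\textbf{Role of $F$ and the main obstacle.} The difficulty is the uniformity over $a$: moment matching controls traces, not individual vectors. That is why the damping $F$ is needed, to project away from the non-free subspace. I would take from the earlier construction the estimate $\|(I-P)F\|$ being small, where $P$ is where moments are controlled, and absorb the remainder into the $\frac{\delta_{\s\t}}N\Tr((I-F^2)\rho)$ term, which only increases entropy. I expect the hardest step to be a net argument over $a\in\C^N$ that makes the $D^{1/2m}$ and Haagerup constants $m^{O(1)}$ compatible with $N=2^{17\ell}$. This step accounts for the $(17\ell+5)2^{34\ell-1}$ term in $m$. Combining the two bounds then gives $2\Hpmin(\Psi)-\Hpmin(\Psi\otimes\Psi)\ge 2(1-\tfrac15)\log N\cdot\tfrac{1}{17}\cdot\ldots-\log N -O(1)$. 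After bookkeeping of constants, this yields $(\frac45\ell-6)\log2$.
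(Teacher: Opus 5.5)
There is a genuine gap. Your two entropy bounds cannot give a gap of order $\ell$ for $p$ near $1$ or below, and this is not a matter of constants.

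On the Bell side you use only $\lambda_{\max}\gtrsim\tau_D(F^2)^2/N$. For $p\le2$ the bound $H_p\le\frac{p}{p-1}H_\infty$ is vacuous, since it is $\ge2\log N$. The ``one eigenvalue $\approx1/N$, remaining mass uniform'' estimate lowers $H_p$ below $2\log N$ only by $O((\log N)/N)$ at $p=1$, and by at most $O(1/N)$ for $p<1$. That is the mechanism behind the tiny $\frac{3}{4k}$ gap of Theorem~\ref{thm:main}, not a $\Theta(\ell)$ gap.

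On the single-channel side you use only $H_p\ge H_\infty\ge\frac45\log N$. At $p=1$ the two bounds combine to $\frac85\log N-2\log N<0$, and they give no positive gap for any $p\le 8/3$. Even a perfect single-channel bound $\log N$ would leave only $O((\log N)/N)$ at $p=1$. Separately, the product-group Haagerup constant for $\sum_\s v_\s u_\s$ is $(2\sqrt\gamma)^\ell$, not $2$. So what is attainable is $\|\Psi(\rho)\|\le B_\ell/N$ with $B_\ell<4\cdot64^\ell$, i.e.\ only $H_\infty\gtrsim\frac{11}{17}\log N$, weaker than your target.

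The missing idea is the rank reduction of \cite{zhen2026almost}, which is what the $\gamma$ commuting copies and the signs $\xi_\beta$ are for. With $F=I$, $Z_0$ is the Gram matrix of the vectors $\frac1N(W_\s\otimes W_{\s'})|\Omega_D\rangle=\frac1N(\lambda_G(g_\s g_{\s'}^{-1})\otimes I)|\Omega_D\rangle$. Commuting copies together with available inverses make the pairs $(s,s')$ and $((s')^{-1},s^{-1})$ collapse. This gives $\rank Z_0\le Q_1^\ell$ with $Q_1\le\frac{17}{32}K^2$, hence $H_p(Z_0)\le2\log N-\ell\log\frac{32}{17}$ for all $p\ge0$ (Lemma~\ref{lem:bell0}). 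You then need two further steps:
\begin{enumerate}[(a)]
\item A transfer to the damped output. The bound $\tau_D(R)\le\eta^2/64$ gives $\|Z-Z_0\|_1<\eta$ by gentle measurement. For $p_0<1$, pinching onto $\supp Z_0$ then gives $\Tr Z^{p_0}\le Q_1^{\ell(1-p_0)}+N^{2(1-p_0)}\eta^{p_0}$. This step, not a largest-eigenvalue analysis, is what the $\frac{\ell+7}{\min(p_0,1/2)}$ term in $m$ pays for.
\item A nearly maximal single-channel bound for $p\le2$, namely $H_2(\Psi(\rho))\ge\log N-\log A_\ell$ with $A_\ell\le6(33/32)^\ell$. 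It comes from $\|FT_{A_q}F\|\le L=2J$ with $J^2=N[(33/32)^\ell-1]$ (Lemma~\ref{lem:ph}). The loss $2\ell\log\frac{33}{32}$ is then beaten by $\ell\log\frac{32}{17}$.
\end{enumerate}

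Finally, your moment-to-norm step fails. Since $n=112m$, $D^{1/(2m)}\approx2^{2688\ell}$ is not $O(1)$. Moreover, the $W_\s$ genuinely have large norm on the constant vector $\sum_g|g\rangle$; for example $\|L_v\|\ge|\sum_\s v_\s|$, which can be $\sqrt N$. So the bounds $\|FT_{A_q}F\|\le L$ and $\|FL_v^*L_vF\|\le B$ must be enforced directly by functional calculus from $F=(I+R)^{-1}$, as the paper does. Moments are used only to show $\tau_D(R)$ is small. The $(17\ell+5)2^{34\ell-1}$ term in $m$ is the logarithm of the net sizes, dominated by the net over $N\times N$ matrices $A$ rather than over vectors.
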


The proof of Theorem~\ref{thm:ext} combines the explicit construction method of Theorem~\ref{thm:main} with the large entropy gap methods of \cite{wang2026unbounded,zhen2026almost}.
To enlarge the entropy gap, the idea is to consider (commuting) products of the free group $F_k$ on $k$ generators, and then use a product form of Haagerup's inequality due to \cite{collins2022additivity}.

\subsection{Outline}
The rest of this paper is organized as follows. 
In Section~\ref{sec:prelim}, we review some background, give an overview of the proof ideas for the main theorem, and provide some derivations and motivations for the later constructions.
In Section~\ref{sec:construct}, we give the explicit constructions of the $U_j$ and $F$, and prove several useful properties about them.
In Section~\ref{sec:end}, we complete the proof of Theorem~\ref{thm:main}.
The appendices provide the proofs of Corollary~\ref{cor:p} and Theorem~\ref{thm:ext}.

\section{Preliminaries and proof overview}\label{sec:prelim}

For Hilbert space dimensions $d,k$, recall a density matrix $\rho\in M_d(\C)$ is a positive semidefinite matrix ($\rho\ge0$) with trace one, and a quantum channel $\Phi:M_d(\C)\to M_k(\C)$ is a linear map which is completely positive and trace-preserving, i.e. $\Phi\otimes \id_s$ maps positive semidefinite matrices to positive semidefinite matrices for any auxiliary dimension $s$, and $\Tr\Phi(\rho)=\Tr\rho$.

\subsection{Freeness} \label{subsec:free}
We review some definitions and background on free Haar unitaries; for a much more complete treatment, see the book \cite{nica2006lectures}. 
First, recall a $*$-probability space is a pair $(\mathcal A,\tau)$, where
\begin{itemize}
\item $\mathcal A$ is a unital algebra over $\C$, with antilinear $*$-operation $a\mapsto a^*\in\mathcal A$ such that $(a^*)^*=a$ and $(ab)^*=b^*a^*$.
\item $\tau$ is a linear functional $\tau:\mathcal A\to\C$ satisfying $\tau(1)=1$ and $\tau(a^*a)\ge0$ for all $a\in\mathcal A$. If $\tau(ab)=\tau(ba)$, then $\tau$ is called a trace.
\end{itemize}
Unitary elements are those $u\in\mathcal A$ satisfying $u^*u=uu^*=1$.
A \emph{Haar unitary} is a unitary satisfying
\begin{align}
\tau(u^j)=0,\quad\forall\, j\in\Z\setminus\{0\}.
\end{align}
(This is not used to mean a Haar-random unitary matrix; these unitaries need not be random nor matrices.)
For a $*$-probability space $(\mathcal A,\tau)$, elements $a_j\in\mathcal A$, $j=1,\ldots,k$, are called \emph{$*$-freely independent} if 
\begin{align}
\tau(p_1(a_{i_1})\cdots p_m(a_{i_m}))=0,
\end{align}
for every choice of noncommutative polynomials $p_1,\ldots,p_m\in\C\langle X,X^*\rangle$ satisfying $\tau(p_j(a_{i_j}))=0$ for all $j=1,\ldots,m$, and any sequence of indices with $i_j\ne i_{j+1}$ (differing consecutive indices). 
We will also refer to this as freely independent, with the understanding that we mean $*$-freely independent.

The standard way to construct an abstract family of freely independent Haar unitaries $u_1,\ldots u_k$ is as follows. Let $F_k$ be the free group on $k$ generators $g_1,\ldots,g_k$, and let $\mathcal B(\ell^2(F_k))$ denote the space of bounded linear operators on $\ell^2(F_k)$, and $\mathcal U(\ell^2(F_k))$ the space of unitary operators on $\ell^2(F_k)$. Let $\lambda:F_k\to \mathcal U(\ell^2(F_k))$ be the left-regular representation of the (discrete) group $F_k$ defined by
\begin{align}\label{eqn:lr}
\lambda(g)|h\rangle=|gh\rangle,\quad\text{for }g,h\in F_k,
\end{align}
and $\{|h\rangle:h\in F_k\}$ the standard orthonormal basis of $\ell^2(F_k)$. Define the unitaries $u_j\in\mathcal U(\ell^2(F_k))$, $j=1,\ldots,k$, and trace $\tau$ on the algebra generated by $\{\lambda(g)\}_{g\in F_k}$ via
\begin{align}\label{eqn:uj}
u_j:=\lambda(g_j),\;j=1,\ldots,k,\quad \tau(x):=\langle e|x|e\rangle, 
\end{align}
for $|e\rangle$ the basis function corresponding to the identity $e\in F_k$.
Then $\tau(u_j^\ell)=0$ for all $\ell\in\Z\setminus\{0\}$ since $g_j^\ell\ne e$, so $u_j$ are Haar unitaries. 
The unitaries $u_1,\ldots u_k\in\mathcal U(\ell^2(F_k))$ then form a freely independent Haar unitary family by the definition of $\lambda$ and free group $F_k$.
We use ``free Haar unitaries'' to mean freely independent Haar unitaries.

An important result in operator algebras and free probability is Haagerup's inequality \cite{haagerup1978example}. It was shown in \cite{collins2018haagerup} how this also plays a key role in the MOE additivity problem. 
Later works \cite{collins2022additivity,fukuda2022additivity,kalantar2025more,zhen2026deterministic,wang2026unbounded,zhen2026almost} also made use of this or extensions of Haagerup's inequality to further construct and study violations of additivity.

\begin{thm}[Haagerup's inequality, {\cite[Lemma 1.4]{haagerup1978example}}]\label{thm:haagerup}
Let $F_k$ be the free group on $k$ generators, and let $f$ be a function in $\ell^2(F_k)$ supported on reduced words of length $n$ in $F_k$. Then letting $\lambda(f):=\sum_{g\in F_k}f(g)\lambda(g)$, we have the operator norm bound
\begin{align}\label{eqn:haagerup}
\|\lambda(f)\|&\le(n+1)\|f\|_2.
\end{align}
\end{thm}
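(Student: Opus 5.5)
We prove the bound $\|\lambda(f)\|\le(n+1)\|f\|_2$ by writing $\lambda(f)$ as a sum of $n+1$ pieces according to how a length-$n$ word cancels against the word it acts on, and bounding each piece by $\|f\|_2$.

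\textbf{Step 1: reduction to finitely supported vectors.} We fix $\xi\in\ell^2(F_k)$ and bound $\|f*\xi\|_2$, where $(f*\xi)(w)=\sum_{gh=w}f(g)\xi(h)$ is exactly $\lambda(f)\xi$. By density and by splitting $\xi$ into its pieces on words of each length, it suffices to treat $\xi$ finitely supported. Lengths do not simply add in the free group, so we expand by word length.

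\textbf{Step 2: the cancellation decomposition.} For $|g|=n$ and $|h|=m$, the reduced product $gh$ has length $n+m-2j$ for a unique $j\in\{0,\ldots,\min(n,m)\}$, namely the number of letters cancelling. In that case $g=ax$ and $h=x^{-1}b$ with $|x|=j$, and $ab$ is reduced, since no further cancellation occurs. For $0\le j\le n$ we define the bilinear piece
\begin{align*}
(f*_j\xi)(w)=\sum_{\substack{|g|=n,\ gh=w\\ \text{exactly } j\text{ letters cancel}}} f(g)\xi(h).
\end{align*}
Then $f*\xi=\sum_{j=0}^n f*_j\xi$. Now decompose $\xi=\sum_m\xi_m$ by length. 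The piece $f*_j\xi_m$ is supported on words of length $n+m-2j$. This $j$-th piece is the object we must control.

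\textbf{Step 3: each piece has norm at most $\|f\|_2\|\xi\|_2$.} This is the main obstacle. Fix $j$ and a word $w=ab$, with $|a|=n-j$ and $|b|=m-j$, where $a$ is the first $n-j$ letters of $w$. Then
\begin{align*}
(f*_j\xi_m)(w)=\sum_{|x|=j} f(ax)\xi(x^{-1}b),
\end{align*}
with the sum over those $x$ compatible with reducedness. By Cauchy--Schwarz this is bounded by $F(a)\,\Xi(b)$, where $F(a)=\bigl(\sum_x|f(ax)|^2\bigr)^{1/2}$ and $\Xi(b)=\bigl(\sum_x|\xi(x^{-1}b)|^2\bigr)^{1/2}$.

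Since $w\mapsto(a,b)$ is injective, we get
\begin{align*}
\|f*_j\xi_m\|_2\le\Bigl(\sum_a F(a)^2\Bigr)^{1/2}\Bigl(\sum_b\Xi(b)^2\Bigr)^{1/2}=\|f\|_2\|\xi_m\|_2 .
\end{align*}

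\textbf{Step 4: summing over $m$ for fixed $j$.} The pieces $f*_j\xi_m$, for different $m$ with $j$ fixed, live on different lengths $n+m-2j$, so they are orthogonal. This gives
\begin{align*}
\|f*_j\xi\|_2^2=\sum_m\|f*_j\xi_m\|_2^2\le\|f\|_2^2\|\xi\|_2^2 .
\end{align*}

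\textbf{Step 5: conclusion.} By the triangle inequality over $j=0,\ldots,n$,
\begin{align*}
\|\lambda(f)\xi\|_2\le\sum_{j=0}^n\|f*_j\xi\|_2\le(n+1)\|f\|_2\|\xi\|_2 ,
\end{align*}
which is \eqref{eqn:haagerup}. The remaining care is bookkeeping: we must check that the map $w\mapsto(a,b)$ is injective for fixed $j$ and $m$ (it is, because $a$ is determined as a prefix of $w$), and that dropping the compatibility constraints on $x$ only increases the upper bound.
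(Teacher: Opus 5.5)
The paper gives no proof of this statement (it is quoted from Haagerup's Lemma 1.4), and your argument is correct and essentially Haagerup's own. Step 3 is his key estimate that the length-$(n+m-2j)$ component of $f*\xi_m$ has $\ell^2$-norm at most $\|f\|_2\|\xi_m\|_2$; you only organize the final summation by cancellation index $j$ (orthogonality over $m$, then the triangle inequality over the $n+1$ values of $j$) rather than by output length with Cauchy--Schwarz, and you reach the same constant $n+1$.

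One bookkeeping point: in $\Xi(b)$, use $\xi_m$ rather than $\xi$. Then, once you drop the reducedness constraint on $x$, the extra terms vanish and $\sum_b\Xi(b)^2=\|\xi_m\|_2^2$ holds exactly. With $\xi$, the unreduced products $x^{-1}b$ are shorter words on which $\xi$ need not vanish, and that equality can fail.
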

In \cite{collins2018haagerup}, Collins showed how to use the $n=2$ case of Theorem~\ref{thm:haagerup} to prove the channel $\tilde\Phi:M_n(\C)\to M_k(\C)$ defined by
\begin{align}\label{eqn:phi-collins}
[\tilde\Phi(\rho)]_{ij}=\frac1k\Tr(U_i^{(n)}\rho\,(U_j^{(n)})^*),
\end{align}
for $n\times n$ independent Haar distributed random unitaries $U_1^{(n)},\ldots, U_k^{(n)}$, gives a violation of additivity of MOE with probability 1 as $n\to\infty$ and for sufficiently large fixed $k$.
(This is also related to the complementary channel of the original Hastings \cite{hastings2009superadditivity} Haar random channel.)
The proof in \cite{collins2018haagerup} uses strong convergence of the unitaries $U_1^{(n)},\ldots,U_k^{(n)}$ to freely independent Haar unitaries \cite{collins2014strong} in order to invoke Theorem~\ref{thm:haagerup} with $n=2$. Note that the channel $\Phi$ in Theorem~\ref{thm:main} is of a similar form as \eqref{eqn:phi-collins}; if the damping factor $F$ were not needed and were just the identity, then $\Phi$ reduces to the channel \eqref{eqn:phi-collins}. 

As in \cite{collins2018haagerup}, we will only need the length $n=2$ case of Theorem~\ref{thm:haagerup}. 
For a $k \times k$ matrix $A$ with zero diagonal, define the function $f_A$, supported on length-2 words, by $f_A(g_i^{-1} g_j) = A_{ij}$ for $i \neq j$, and $f_A(g) = 0$ otherwise.
Then $\|f_A\|_2^2=\|A\|_\hs^2$, for $\|\cdot\|_\hs$ the Hilbert--Schmidt or Frobenius norm, and we can define
\begin{align}\label{eqn:TA}
T_{A,\mathrm{free}}:=\lambda(f_A)=\sum_{i\ne j}A_{ij}\lambda(g_i)^*\lambda(g_j).
\end{align}
Haagerup's inequality Theorem~\ref{thm:haagerup} for $n=2$ then becomes
\begin{align}\label{eqn:haagerup2}
\|T_{A,\mathrm{free}}\|&\le 3\|A\|_\hs.
\end{align}

\subsection{Proof overview}\label{subsec:proof-idea}

To prove non-additivity for $\Phi$ in \eqref{eqn:phi}, we need to prove (1) $H(\Phi(\rho))$ is ``large'' for all possible inputs $\rho$, and (2) there is some entangled state $\rho_{12}$ for which $H((\Phi\otimes\Phi)(\rho_{12}))$ is ``small''. The latter in finite-dimensional systems is typically the easier direction, since one can often use a Bell state $\rho_{12}=|\Omega_d\rangle\langle\Omega_d|$ for $|\Omega_d\rangle=d^{-1/2}\sum_{j=1}^d|j\rangle\otimes|j\rangle$. The Bell state will work here as well, although we will have to show the damping term $F$ in \eqref{eqn:phi} does not cause an issue.
For the former, as discussed above, \cite{collins2018haagerup} showed that if one had free Haar unitaries $u_j$, or more precisely, finite-dimensional matrices $U_j^{(n)}$ converging strongly to free Haar unitaries $u_j$, then one could define the channel \eqref{eqn:phi-collins} and use Haagerup's inequality \cite{haagerup1978example} to obtain the required largeness of $H(\Phi(\rho))$ and violation of additivity.
The construction of \cite{collins2018haagerup} is however random, using strong convergence to (infinite-dimensional) free unitaries, and it is still an open question in the area to construct explicit deterministic sequences that converge strongly to the appropriate free unitary family \cite{magee2025strong,vanhandel2026strong}. Fortunately, we do not need full strong convergence, and it will be enough to match some free moments.

More specifically, we use a family of finite-dimensional unitary matrices whose normalized trace moments agree with those of free Haar unitaries up to a fixed finite order.
The moments do not control the unitaries' behavior on all subspaces, so we also add the damping factor $F$ to suppress undesirable behavior.
The first term $\frac1k\Tr(U_iF\rho FU_j^*)$ in \eqref{eqn:phi} is the analogue of the random channel $\tilde\Phi$ with the damping factor $F$ installed. Since $F$ may suppress some of the trace, the second term $\Tr((I-F^2)\rho)I_k/k$ in \eqref{eqn:phi} is added solely to make $\Phi$ trace-preserving.
We will then show that, due to the free moment behavior of the unitaries, we can take $F$ similar enough to the identity, so that the channel $\Phi$ in \eqref{eqn:phi} retains enough similarities to the random channel example $[\tilde\Phi(\rho)]_{ij}= \frac{1}{k} \Tr(U_i^{(n)}\rho\,(U_j^{(n)})^*)$, to similarly obtain a violation of MOE additivity.

\subsection{Minimum output entropy and \texorpdfstring{$T_A$}{TA}}\label{subsec:prelim-ta}
To motivate the constructions in Section~\ref{sec:construct}, we follow \cite{collins2018haagerup} and explain how the free probability quantity $T_{A,\mathrm{free}}$ in \eqref{eqn:TA}, or more specifically its finite-dimensional analogue 
\begin{align}\label{eqn:TA-finite0}
T_A:=\sum_{i\ne j}A_{ij}U_i^*U_j,
\end{align}
for $U_i$ the unitaries in Theorem~\ref{thm:main} and defined in Section~\ref{subsec:construction}, relates to $\Hmin(\Phi)$.
To show that $H(\Phi(\rho))$ is large for all input density matrices $\rho$, we want to show that $\Phi(\rho)$ is close to the maximally mixed state $\frac1kI_k$ for any $\rho$, in particular that the Hilbert--Schmidt or Frobenius norm $\|\Phi(\rho)-I_k/k\|_\hs$ is small. Due to the entropy estimate\footnote{This can also be proved directly using the inequality $\log x\le x-1$.} \cite{hastings2009superadditivity,collins2018haagerup}
\begin{align}\label{eqn:Hbound}
H(\sigma)&\ge\log k-k\|\sigma-I_k/k\|_\hs^2,
\end{align}
controlling $\|\Phi(\rho)-I_k/k\|_\hs$ will give a quantitative lower bound on $H(\Phi(\rho))$, which can be used to show it is large.

The quantity $\|\Phi(\rho)-I_k/k\|_\hs$ can be expressed in terms of $T_A$ as follows. 
Let $D:=\Phi(\rho)-\frac1kI_k$, and observe that for self-adjoint, zero-diagonal $D$,
\begin{align}\label{eqn:D}
\|D\|_\hs&=\max_{\|A\|_\hs=1,A=A^*,\operatorname{diag}(A)=0}\Tr[AD],
\end{align}
with the maximum obtained for $A=D/\|D\|_\hs$ (for $D\ne0$). 
The channel $\Phi$ has diagonal entries all $1/k$, so $D$ has zero diagonal and we can indeed optimize only over $A$ with $\operatorname{diag}A=0$. 
Since $A_{ii}=0$, a quick expansion with the definition of $\Phi$ in \eqref{eqn:phi} shows
\begin{align}\label{eqn:trad}
\Tr[AD]=\Tr[A(\Phi(\rho)-I_k/k)]=\frac1k\Tr(\rho FT_AF)\le \frac1k\|FT_AF\|.
\end{align}
If we were somehow in the truly free setting, then we would take $F=I$ and have $\|T_A\|\le 3$ by Haagerup's inequality, which would give a good bound on $\|D\|_\hs$. 
Instead, while the finite-dimensional analogue $T_A$ may behave like $T_{A,\mathrm{free}}$ on most subspaces, we will have to introduce the damping factor $F$ to suppress large operator norm of $T_A$ from exceptional subspaces. The quantity $\|FT_AF\|$ will imitate Haagerup's inequality by construction, and free-like moments of the $U_i$'s will show that $F$ is similar enough to the identity for the rest of the additivity argument to go through.

\section{Explicit construction}\label{sec:construct}

In this section, we define the matrices $U_i$ and $F$ used in Theorem~\ref{thm:main}, and prove several useful properties about them.
Explicit choices of parameters are made in Section~\ref{subsec:parameters}.

\subsection{Unitary construction}\label{subsec:construction}

To define the $U_i$'s, we first need an explicit representation of the free group $F_2$ on 2 generators. A particularly useful construction is the Sanov representation \cite{sanov1947property}, for which Sanov proved that the two $\SL_2(\Z)$ matrices
\begin{align}\label{eqn:sanov}
P=\begin{pmatrix}1&2\\0&1\end{pmatrix},\quad Q=\begin{pmatrix}1&0\\2&1\end{pmatrix},
\end{align}
generate an instance of $F_2$ in $\SL_2(\Z)$. Recall a subset $\{g_i\}_i\subseteq G$ is free if every nonempty reduced word is nontrivial, i.e. if every reduced word $g_{i_1}^{\varepsilon_1}\cdots g_{i_k}^{\varepsilon_k}\ne e$, for $i_j\ne i_{j+1}$, any $k\ge1$, and $\varepsilon_j\in\Z\setminus\{0\}$. 
Since $P$ and $Q$ are free generators, 
we see the matrices
\begin{align}\label{eqn:aj}
a_j:=P^jQP^{-j}=\begin{pmatrix}1+4j&-8j^2\\2&1-4j\end{pmatrix},
\end{align}
for any set of distinct $j\in\Z$, are free. The standard construction \eqref{eqn:uj} then forms an explicit freely independent Haar unitary family.

In order to produce finite-dimensional $U_j$'s which retain some of the free behavior, we will reduce the group generated by the $a_j$'s modulo a large integer $M=2^n$, and consider the group $G=\SL_2(\Z/M\Z)$. 
This is a specific instance of the finite-quotient constructions which give permutation-matrix approximations of free groups \cite{pestov2008hyperlinear}.
Fix an integer $k=2^s$ with $s\ge1$; we will later choose $k=2^{32}$ in Section~\ref{subsec:parameters}, in agreement with Theorem~\ref{thm:main}.
The resulting unitaries are defined in the same way as \eqref{eqn:uj}, via
\begin{align}\label{eqn:Uj-finite}
U_j|g\rangle:=|a_jg\rangle,\quad j=1,\ldots,k,
\end{align}
for $a_j$ as in \eqref{eqn:aj} reduced modulo $M$, and any $g\in G=\SL_2(\Z/M\Z)$.
The unitaries $U_j$ are permutation matrices and have size (see \cite[p.46]{apostol1990modular} or OEIS \href{https://oeis.org/A000056}{A000056}),
\begin{align}\label{eqn:d}
d=|\SL_2(\Z/2^n\Z)|=3\cdot 2^{3n-2}.
\end{align}
We will choose $M=2^n$ large enough so that every nontrivial reduced word in the $a_j^{\pm1}$'s of length $\le 4m$, for $m$ a parameter to be chosen later in Section~\ref{subsec:parameters}, remains nontrivial. Using the explicit form of the entries in \eqref{eqn:aj} for $j=1,\ldots,k$, which have easy inverses since they are in $\SL_2(\Z)$, we see the largest absolute row or column sum
of any $a_j$ or $a_j^{-1}$ is $\le 1+4k+8k^2<12k^2$. 
Under the matrix multiplication, the largest absolute row or column sum is submultiplicative, so a reduced word $w$ of length at most $4m$ has maximum entry at most $(12k^2)^{4m}$. We want to make sure $w\ne I_2$, so we consider $w-I_2$, which could increase the maximum entry by 1. Thus to ensure no nonempty reduced words $w$ of length at most $4m$ reduce to $I_2$ modulo $M$, we use
\begin{align}\label{eqn:M}
(12k^2)^{4m}+1<(16k^2)^{4m}=:M,
\end{align}
with the $16$ chosen since it is a power of 2.
For $k$ a power of $2$, we then have $M=2^n$ for $n=16m+8m\log_2k$.

The choice of large $M$ will ensure that trace moments of the finite-dimensional analogue $T_A$ defined in \eqref{eqn:TA-finite} match the free value for $T_{A,\mathrm{free}}$ from \eqref{eqn:TA}, up to order $2m$. 
We take the normalized trace $\tau_d(T):=\frac1d\Tr(T)$ for $T$ acting on $\SL_2(\Z/M\Z)$. Then for the left-regular representation $\lambda_G(g)|h\rangle=|gh\rangle$,
\begin{align}
\tau_d(\lambda_G(g))=\frac1d\sum_{h\in G}\langle h|gh\rangle=\oneb_{g=e}.
\end{align}
This agrees with the free case \eqref{eqn:uj}, $\tau_{\mathrm{free}}(\lambda_{F_k}(g))=\oneb_{g=e}$.

Analogously to \eqref{eqn:TA}, and in agreement with \eqref{eqn:TA-finite0}, we take the finite-dimensional
\begin{align}\label{eqn:TA-finite}
T_A:=\sum_{i\ne j}A_{ij}U_i^*U_j=\sum_{i\ne j}A_{ij}\lambda_G(a_i^{-1}a_j).
\end{align}
Thus each individual summand term in $T_A^{2m}$ corresponds to a word in $G=\SL_2(\Z/M\Z)$ of length at most $4m$. 
Since modding out by $M$ cannot make one such nontrivial word become trivial, we see that
\begin{align*}
\tau_d(T_A^{2m})
&=\sum_{i_1\ne j_1,\ldots,i_{2m}\ne j_{2m}}A_{i_1j_1}\cdots A_{i_{2m}j_{2m}}\tau_d(\lambda_G(a_{i_1}^{-1}a_{j_1}\cdots a_{i_{2m}}^{-1}a_{j_{2m}}))\\
&=\sum_{i_1\ne j_1,\ldots,i_{2m}\ne j_{2m}}A_{i_1j_1}\cdots A_{i_{2m}j_{2m}}\tau_{\mathrm{free}}(\lambda_{F_k}(a_{i_1}^{-1}a_{j_1}\cdots a_{i_{2m}}^{-1}a_{j_{2m}}))
=\tau_{\mathrm{free}}(T_{A,\mathrm{free}}^{2m}).\numberthis
\end{align*}
For $A$ self-adjoint, $T_A$ and $T_{A,\mathrm{free}}$ are self-adjoint. 
Then using Haagerup's inequality Theorem~\ref{thm:haagerup} for the free setting, we obtain 
\begin{lem}[moment matching]\label{lem:moments}
Let $A$ be a $k\times k$ hermitian matrix and $\|A\|_\hs=1$. Then for $T_A$ defined in \eqref{eqn:TA-finite} and $T_{A,\mathrm{free}}$ defined in \eqref{eqn:TA}, we have
\begin{align}\label{eqn:trace3}
\tau_d(T_A^{2m})=\tau_{\mathrm{free}}(T_{A,\mathrm{free}}^{2m})\le \|T_{A,\mathrm{free}}^{2m}\|\le 3^{2m}. 
\end{align}
\end{lem}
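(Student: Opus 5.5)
The equality $\tau_d(T_A^{2m})=\tau_{\mathrm{free}}(T_{A,\mathrm{free}}^{2m})$ is essentially the computation displayed just before the statement; I would make it rigorous, then bound the free moment by the operator norm and apply Haagerup's inequality in the form \eqref{eqn:haagerup2}.

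For the equality, expand $T_A^{2m}$ by multilinearity into a sum over index tuples $(i_1,j_1,\ldots,i_{2m},j_{2m})$ with $i_r\ne j_r$. Each term is $A_{i_1j_1}\cdots A_{i_{2m}j_{2m}}$ times $\lambda_G$ of the reduction mod $M$ of the word $w=a_{i_1}^{-1}a_{j_1}\cdots a_{i_{2m}}^{-1}a_{j_{2m}}$. The corresponding free term is the same coefficient times $\lambda_{F_k}$ of the same formal word in the generators $g_i$. It therefore suffices to show that for every such tuple, $w\equiv I_2 \pmod M$ exactly when the formal word in the $g_i$ is trivial in $F_k$. Both traces are then indicator functions of the same condition, because $\tau_d(\lambda_G(g))=\oneb_{g=e}$ and $\tau_{\mathrm{free}}(\lambda_{F_k}(g))=\oneb_{g=e}$.

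To check this, freely reduce the formal word. Its reduced form has length at most $4m$. If the reduced form is empty, then $w=I_2$ in $\SL_2(\Z)$, hence also modulo $M$. If the reduced form is nonempty, then since the $a_j$ are free in $\SL_2(\Z)$ (Sanov), we have $w\ne I_2$ in $\SL_2(\Z)$. The entries of $w-I_2$ are bounded in absolute value by $(12k^2)^{4m}+1<M$, by the submultiplicativity estimate leading to \eqref{eqn:M}. A nonzero integer of absolute value less than $M$ is nonzero mod $M$, so $w\not\equiv I_2$. This step, that reduction mod $M$ creates no new relations among words of length up to $4m$, is the only real content, and \eqref{eqn:M} already provides it. The one subtlety is that the bound must be applied to the reduced word rather than the unreduced one. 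This is harmless, since reduction only shortens the word and does not change the group element.

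For the inequalities, note that $A$ is hermitian, so $T_{A,\mathrm{free}}$ is self-adjoint, and $T_{A,\mathrm{free}}^{2m}=(T_{A,\mathrm{free}}^m)^*T_{A,\mathrm{free}}^m\ge0$. Since $\tau_{\mathrm{free}}(x)=\langle e|x|e\rangle$ is a vector state, $\tau_{\mathrm{free}}(T_{A,\mathrm{free}}^{2m})$ is real, nonnegative, and at most $\|T_{A,\mathrm{free}}^{2m}\|$. Self-adjointness gives $\|T_{A,\mathrm{free}}^{2m}\|=\|T_{A,\mathrm{free}}\|^{2m}$. Finally, $A$ has zero diagonal in this setting, so it falls under \eqref{eqn:haagerup2}, which gives $\|T_{A,\mathrm{free}}\|\le3\|A\|_\hs=3$ and hence the bound $3^{2m}$.
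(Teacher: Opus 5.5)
Your proposal is correct and follows the paper's argument: expand $T_A^{2m}$ into words of length at most $4m$, use Sanov freeness and the entry bound behind \eqref{eqn:M} to see that reduction mod $M$ creates no new relations (so both traces are the same indicator), then bound the free moment by the norm using positivity of the vector state $\tau_{\mathrm{free}}$, and finish with \eqref{eqn:haagerup2}. Two minor remarks: the entry bound $(12k^2)^{4m}$ already holds for the unreduced product of $4m$ matrices, so the reduced/unreduced subtlety you flag is not actually needed; and if you do not want to assume $\operatorname{diag}(A)=0$, note that only the off-diagonal part of $A$ enters $T_A$, and its Hilbert--Schmidt norm is at most $1$.
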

This quantifies the heuristic that $T_A$ cannot be much larger than 3 except perhaps on a small subspace.
This similarity to free behavior, along with the damping properties discussed next, will be used in Section~\ref{sec:end} to prove Theorem~\ref{thm:main}.

\subsection{Damping construction}\label{subsec:damping}

Even though the moments match up to a large value $2m$, the finite-dimensional $U_j$'s display very non-free behavior on certain subspaces, e.g.~all the $U_j$'s are finite permutation matrices and so preserve the vector $\sum_{g\in G}|g\rangle$, which can then produce a large norm for $T_A$; in particular, the associated sequence of families $(U_1^{(d)},\ldots,U_k^{(d)})$ does not converge strongly to a free Haar unitary family.
To resolve this problem, we define the damping operator $F$ which penalizes any subspace where $T_A$ could be large. By construction, this forces a Haagerup-like inequality for $\|FT_AF\|$. The free-like behavior of the $U_j$'s will then be used to show that $F$ can be taken close enough to the identity (in terms of normalized trace) to allow the rest of the MOE additivity violation argument, particularly the small $\Hmin(\Phi\otimes\Phi)$ using the Bell witness, to go through.

From \eqref{eqn:trad} in Section~\ref{subsec:prelim-ta}, we want to construct $F$ so that $\|FT_AF\|$ is small.
We don't know which specific $A$ we need to focus on, since it depends on $D=D(\rho)$ from \eqref{eqn:D}, so we effectively put a penalty on all of them where $T_A$ is large. To do this, we form an $\varepsilon$-net of the Frobenius sphere $S_F:=\{A:A=A^*,\|A\|_\hs=1,\operatorname{diag}(A)=0\}$, as this is the set of $A$'s to consider in \eqref{eqn:D}.

To make everything in the construction explicit, we give an explicit net as follows.
The standard real orthonormal basis for zero-diagonal complex hermitian matrices with the Frobenius inner product is given by the $r=k(k-1)$ matrices $E_{ab}^{\mathrm R}=\frac{1}{\sqrt{2}}(|a\rangle\langle b|+|b\rangle\langle a|)$ and $E_{ab}^{\mathrm I}=\frac{i}{\sqrt{2}}(|a\rangle\langle b|-|b\rangle\langle a|)$, $a<b$.
For an integer $\eta>\sqrt{r}$, choose the set
\begin{align}\label{eqn:net}
\mathcal N_\eta:=\left\{\frac{\sum_{a<b}(x_{ab}E_{ab}^{\mathrm R}+y_{ab}E_{ab}^{\mathrm I})}{\sqrt{\sum_{a<b}(x_{ab}^2+y_{ab}^2})}: (x_{ab},y_{ab})_{a<b}\in\{-\eta,\ldots,\eta\}^r\setminus\{0\}\right\}\subset S_F.
\end{align}
Note that $\sum_{a<b}(x_{ab}E_{ab}^{\mathrm R}+y_{ab}E_{ab}^{\mathrm I})$ is just the hermitian matrix whose entries are $\frac{1}{\sqrt{2}}(x_{ab}+iy_{ab})$ for $a<b$.
Given any unit $v\in S_F$, consider the hermitian matrix $W$ formed by rounding all coordinates of $\eta v$ to the nearest integer. 
For $\eta>\sqrt{r}$, at least one of the $r$ real coordinate of $\eta v$ has absolute value $\ge1$, so $W\ne0$.
Then define $w:=W/\|W\|_\hs\in\mathcal N_\eta$, and check
\begin{align*}
\|w-v\|_\hs&\le \|w-W/\eta\|_\hs+\|W/\eta-v\|_\hs=|1-\|W\|_\hs/\eta|+\|W/\eta-v\|_\hs\\
&\le 2\|W/\eta-v\|_\hs\le \frac{1}{\eta}\sqrt{r}. \numberthis
\end{align*}
Thus $\mathcal N_\eta$ is an $\varepsilon:=\sqrt{r}/\eta$-net for $S_F$ of size $|\mathcal N_\eta|\le (2\eta+1)^r$.
Letting $q:=(x_{ab},y_{ab})_{a<b}\in\{-\eta,\ldots,\eta\}^r\setminus\{0\}$, let $A_q$ be the point in the net
\begin{align}
A_q:=\frac{\sum_{a<b}(x_{ab}E_{ab}^{\mathrm R}+y_{ab}E_{ab}^{\mathrm I})}{\sqrt{\sum_{a<b}(x_{ab}^2+y_{ab}^2)}}.
\end{align}
For all $q\in\{-\eta,\ldots,\eta\}^r\setminus\{0\}$, we want to make sure $\|FT_{A_q}F\|$ is not too large. We do this by defining, for a constant $L>3$ to be chosen later,
\begin{align}\label{eqn:R}
R:=\sum_{q\in\{-\eta,\ldots,\eta\}^r\setminus\{0\}}\left(\frac{T_{A_q}}{L}\right)^{2m},\quad F:=(I+R)^{-1}.
\end{align}
Since $T_{A_q}$ are self-adjoint for hermitian $A_q$, we see $R\ge0$, which also implies $0< F\le1$.
Intuitively, if $T_{A_q}v$ had large norm, then $R$ would be large on $v$, and the damping factor $F$ would suppress the large-$R$ part of $v$. 
The parameter $m$ lets us tune the strength of the penalty, and will also be used to counteract the size of the net in the estimates below. Large $m$ will both amplify deviations of $T_{A_q}$ above $L$, and reduce the trace of $R$ which helps with making $F$ more similar to the identity. The trade-off is that larger $m$ will lead to larger $M$ in \eqref{eqn:M} and larger input dimension $d$.

From the moment matching Lemma~\ref{lem:moments}, we can estimate
\begin{align}
\tau_d(R)&\le (2\eta+1)^r\left(\frac{3}{L}\right)^{2m}.
\end{align}
To control $F$, note that for $x\ge0$, algebraic manipulation gives $(1+x)^{-2}\ge1-2x$, so
\begin{align*}
\tau_d(F^2)&=\tau_d((I+R)^{-2})\ge 1-2\tau_d(R)\ge 1-2(2\eta+1)^r\left(\frac3L\right)^{2m}.\numberthis\label{eqn:tauF2}
\end{align*}
We will later choose parameters $\eta,m,L$ so this is close to 1. Equation~\eqref{eqn:tauF2} will be used in Proposition~\ref{prop:2} to show the two-use channel $\Phi\otimes\Phi$ on the Bell state has low entropy.
In particular, a large trace in \eqref{eqn:tauF2} will rule out the possibility that e.g. $F\approx0$, which would make bounding $\|FT_AF\|$ easy, but would destroy the entropy gap argument since then $\Phi(\rho)\approx I_k/k$.

We can check that $R$ is real due to the symmetry $(x,y)\mapsto(x,-y)$ in the net $\mathcal N_\eta$, as $T_{A_{(x,y)}}^{2m}+T_{A_{(x,-y)}}^{2m}$ is real, since $A_{(x,-y)}=\overline{A_{(x,y)}}$ so that $T_{A_{(x,-y)}}=T_{\overline{A_{(x,y)}}}=\overline{T_{A_{(x,y)}}}$, since the $U_i$'s are real.
Thus $F$ is real, and $\Phi$ is also real.

We combine some of the above results, as well as a bound on $\|FT_{A_q}F\|$, into
\begin{lem}[$F$ properties]\label{lem:F}
The damping term $F$ defined in \eqref{eqn:R} is real symmetric, and satisfies $0<F\le1$ and the trace bound \eqref{eqn:tauF2}.
Additionally, for every $A_q$ in the net \eqref{eqn:net},
\begin{align}\label{eqn:lbound}
\|FT_{A_q}F\|\le L.
\end{align}
\end{lem}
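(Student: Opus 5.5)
The first three properties are essentially already in hand, so I only need to collect them. $F$ is real because $R$ is real, as argued just before the statement: the net is invariant under $(x,y)\mapsto(x,-y)$ and the $U_i$ are real permutation matrices. $R$ is a sum of even powers of the self-adjoint operators $T_{A_q}$, so $R$ is real symmetric with $R\ge0$. Then $I+R\ge I$ is invertible, and functional calculus gives that $F=(I+R)^{-1}$ is real symmetric with $0<F\le 1$. The trace bound \eqref{eqn:tauF2} was derived above from Lemma~\ref{lem:moments} and the scalar inequality $(1+x)^{-2}\ge 1-2x$, applied spectrally to $R$.

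The real content is \eqref{eqn:lbound}. Fix $q$ and write $T=T_{A_q}$. Since $FTF$ is self-adjoint, it suffices to bound $|\langle v,FTFv\rangle|$ for unit vectors $v$. Set $w=Fv$; then $|\langle w,Tw\rangle|\le \|T^{1/2}\|$-type bounds are unavailable since $T$ is not positive, so I would instead use operator inequalities.

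The key observation is that $R\ge (T/L)^{2m}$, because all other summands are positive. Since $T$ is self-adjoint, spectrally $|T|\le L\,(I+(T/L)^{2m})^{1/(2m)}$, and more crudely $|T|/L\le I+(T/L)^{2m}$ holds because $x\le 1+x^{2m}$ for $x\ge0$. Hence $-L(I+R)\le T\le L(I+R)$ as operators. Conjugating by $F=(I+R)^{-1}$, which preserves operator order, gives
\begin{align*}
-L\,F(I+R)F\le FTF\le L\,F(I+R)F,\qquad F(I+R)F=(I+R)^{-1}=F\le I.
\end{align*}
Therefore $-LI\le FTF\le LI$, so $\|FT_{A_q}F\|\le L$. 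This holds uniformly in $q$, and it does not even use $L>3$.

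The only step needing care is the operator inequality $\pm T\le L(I+(T/L)^{2m})$. It follows from the scalar inequality $|x|\le L(1+(x/L)^{2m})$ via functional calculus for the single self-adjoint operator $T$. The inequality $(T/L)^{2m}\le R$ is needed to replace $(T/L)^{2m}$ by the full $R$, and that replacement is valid simply because the remaining summands of $R$ are positive. No commutation between $T$ and $R$ is required, since the conjugation by $F$ is a congruence, which preserves order.
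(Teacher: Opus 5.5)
Your proposal is correct and follows the paper's proof essentially verbatim. The scalar bound $|t|\le 1+t^{2m}$ together with the positivity of the other summands gives $-L(I+R)\le T_{A_q}\le L(I+R)$; the congruence $X\mapsto FXF$ then yields $-LF\le FT_{A_q}F\le LF$, and $0<F\le I$ finishes, with the remaining properties simply collected from the preceding discussion, as you do.
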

\begin{proof}
We only still need to prove \eqref{eqn:lbound}, for which we apply some functional calculus inequalities.
First observe that for $t\in\R$, we have $|t|\le 1+t^{2m}$ for any $m\ge1$; thus from the definition of $R$ in \eqref{eqn:R}, we see
\begin{align}
-L(I_d+R)\le T_{A_q}\le L(I_d+R).
\end{align}
One can directly check that the map $X\mapsto FXF^*$, for any matrix $F$, preserves operator order, and so
\begin{align}
-LF\le FT_{A_q}F\le LF.
\end{align}
Since $0<F\le I$, we obtain $\|FT_{A_q}F\|\le L$.
\end{proof}

\section{Proof of Theorem~\ref{thm:main}}\label{sec:end}

In this section, we prove Theorem~\ref{thm:main}. Recall this requires us to show (1) $H(\Phi(\rho))$ is large for all $\rho$, and (2) there is some entangled state $\rho_{12}$ for which $H((\Phi\otimes\Phi)(\rho_{12}))$ is small.
We however start by proving
\begin{lem}\label{lem:channel}
$\Phi$ defined in \eqref{eqn:phi} is a channel.
\end{lem}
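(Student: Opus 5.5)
To show $\Phi$ is a channel, I will exhibit an explicit Kraus (Stinespring) decomposition, which gives complete positivity and linearity at once, and then check trace preservation directly. By Lemma~\ref{lem:F}, $F$ is real symmetric with $0<F\le I_d$. Hence $I_d-F^2\ge0$, and its positive square root $G:=(I_d-F^2)^{1/2}$ is well defined.

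\textbf{Complete positivity.} I will split $\Phi=\Phi_1+\Phi_2$ according to the two terms in \eqref{eqn:phi}.

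For the first term, define $V:\C^d\to\C^k\otimes\C^d$ by $V|x\rangle:=\frac{1}{\sqrt k}\sum_{i=1}^k|i\rangle\otimes U_iF|x\rangle$. Then
\begin{align*}
[\Tr_{\C^d}(V\rho V^*)]_{ij}=\frac1k\Tr(U_iF\rho FU_j^*),
\end{align*}
so $\Phi_1(\rho)=\Tr_2(V\rho V^*)$. A conjugation followed by a partial trace is completely positive, so $\Phi_1$ is CP.

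For the second term, $\Phi_2(\rho)=\frac1k\Tr(G\rho G)\,I_k$. It has Kraus operators $K_{i,x}:=\frac{1}{\sqrt k}|i\rangle\langle x|G$ for $i\le k$ and $x\le d$, since
\begin{align*}
\sum_{i,x}K_{i,x}\rho K_{i,x}^*=\frac1k\sum_i\Tr(G\rho G)|i\rangle\langle i|.
\end{align*}
So $\Phi_2$ is CP as well, and the sum $\Phi$ is CP.

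\textbf{Trace preservation.} The trace of the first term is
\begin{align*}
\sum_i\frac1k\Tr(U_iF\rho FU_i^*)=\Tr(F\rho F)=\Tr(F^2\rho),
\end{align*}
using unitarity of the $U_i$ and cyclicity of the trace. The trace of the second term is $\Tr((I_d-F^2)\rho)$. These add to $\Tr\rho$.

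Equivalently, one can check the Kraus completeness relation
\begin{align*}
\sum_i\frac1kFU_i^*U_iF+\sum_{i,x}K_{i,x}^*K_{i,x}=F^2+G^2=I_d.
\end{align*}

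There is no real obstacle here. The only point needing care is that $I_d-F^2\ge0$, which requires $F\le I$; Lemma~\ref{lem:F} supplies this.
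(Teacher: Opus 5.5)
Your proof is correct and follows the paper's route: the same split into the two terms of \eqref{eqn:phi}, the same Stinespring operator $k^{-1/2}\sum_i|i\rangle\otimes U_iF$ for the first term, and the same trace computation. The only difference is cosmetic. For the term $\frac1k\Tr((I_d-F^2)\rho)I_k$ you write explicit Kraus operators $k^{-1/2}|i\rangle\langle x|(I_d-F^2)^{1/2}$, whereas the paper checks positivity of $(\Gamma\otimes\id_s)(X)$ directly; both rely on $F\le I$ from Lemma~\ref{lem:F}.
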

\begin{proof}
We need to verify $\Phi$ defined in \eqref{eqn:phi} is completely positive and trace-preserving.
First, the entire point of the second term in \eqref{eqn:phi} is to make $\Phi$ trace preserving, since $F$ can suppress some of the trace. Summing the diagonal entries in \eqref{eqn:phi}, we obtain
\begin{align}
\Tr\Phi(\rho)=\Tr(F^2\rho)+\Tr((I_d-F^2)\rho)=\Tr(\rho),
\end{align}
as desired.

To show $\Phi$ is completely positive, write it as $\Phi(\rho)=\Lambda(\rho)+\Gamma(\rho)$, for $[\Lambda(\rho)]_{ij}=\frac1k\Tr(U_iF\rho FU_j^*)$ the first part and $\Gamma(\rho)=\Tr((I-F^2)\rho)I_k/k$ the second part of \eqref{eqn:phi}. 
We will show both $\Lambda$ and $\Gamma$ are completely positive.
For $\Lambda$, define the map $W:\C^d\to \C^k\otimes\C^d$ below, and write $\Lambda$ in the Stinespring representation
\begin{align}
\Lambda(\rho)&=\Tr_{\C^d}(W\rho W^*),\quad\text{for}\quad W:=k^{-1/2}\sum_{i=1}^k|i\rangle\otimes U_iF.
\end{align}
This is completely positive by \cite[Theorem 2.22]{watrous2018book}.

For $\Gamma(\rho)=\Tr((I-F^2)\rho)I_k/k$, let $S:=(I-F^2)^{1/2}$, and consider any auxiliary dimension $s$ and $X\ge0$ in $M_d(\C)\otimes M_s(\C)$. Expanding $X$ in a tensor basis shows
\begin{align}
(\Gamma\otimes\operatorname{id}_s)(X)&=\frac1kI_k\otimes\Tr_{\C^d}((S\otimes I_s)X(S\otimes I_s)).
\end{align}
As noted before, for any matrix $A$, $X\mapsto AXA^*$ preserves operator order. Since partial trace preserves positive semidefiniteness, then $\Gamma$ is completely positive.
\end{proof}

We now prove that $H(\Phi(\rho))$ is large for all inputs $\rho$.
\begin{prop}[large $\Hmin(\Phi)$]\label{prop:1}
Suppose the net $\mathcal N_\eta$ in \eqref{eqn:net} is an $\varepsilon$-net with $\varepsilon<1$.
For any input density matrix $\rho$,
\begin{align}\label{eqn:lower1}
H(\Phi(\rho))&\ge \log k-\frac{L^2}{k(1-\varepsilon)^2}.
\end{align}
\end{prop}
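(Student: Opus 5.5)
The plan is to bound $\|D\|_\hs$, with $D:=\Phi(\rho)-I_k/k$, by $L/(k(1-\varepsilon))$ and then feed this into the entropy estimate \eqref{eqn:Hbound}. That estimate gives
\[
H(\Phi(\rho))\ge\log k-k\|D\|_\hs^2\ge \log k-\frac{L^2}{k(1-\varepsilon)^2},
\]
which is exactly \eqref{eqn:lower1}. If $D=0$ there is nothing to prove, so assume $D\neq 0$.

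\textbf{Step 1: reduce to the sphere $S_F$.} As noted after \eqref{eqn:D}, $D$ is hermitian with zero diagonal, because every diagonal entry of $\Phi(\rho)$ equals $1/k$. Hence $A^\star:=D/\|D\|_\hs\in S_F$ and $\|D\|_\hs=\Tr[A^\star D]$.

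\textbf{Step 2: control a general $A\in S_F$ using the net.} Set
\[
\beta:=\sup_{A\in S_F}\frac1k\|FT_AF\|.
\]
This is finite by compactness. Given $A\in S_F$, pick $A_q\in\mathcal N_\eta$ with $\|A-A_q\|_\hs\le\varepsilon$. The map $A\mapsto T_A$ is real-linear, so
\[
T_A=T_{A_q}+T_{A-A_q}.
\]
The difference $A-A_q$ is hermitian with zero diagonal. If it is nonzero, write $A-A_q=\|A-A_q\|_\hs\, B$ with $B\in S_F$. Then Lemma~\ref{lem:F} gives
\[
\frac1k\|FT_AF\|\le \frac{L}{k}+\varepsilon\beta.
\]
Taking the supremum over $A$ yields $\beta\le L/k+\varepsilon\beta$. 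Since $\varepsilon<1$, this gives $\beta\le \frac{L}{k(1-\varepsilon)}$. This is the standard net-to-sphere bootstrapping argument.

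\textbf{Step 3: combine.} By \eqref{eqn:trad} applied with $A^\star$,
\[
\|D\|_\hs=\Tr[A^\star D]=\frac1k\Tr(\rho FT_{A^\star}F)\le\frac1k\|FT_{A^\star}F\|\le\beta\le\frac{L}{k(1-\varepsilon)}.
\]
Together with \eqref{eqn:Hbound}, this completes the argument.

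The only mildly delicate point is Step 2. One must use the supremum over the whole sphere rather than just over net points, so that the error term $T_{A-A_q}$ can be absorbed. One must also note that \eqref{eqn:trad} requires $\Tr[AD]$ to be real, which holds because both $A$ and $D$ are hermitian.
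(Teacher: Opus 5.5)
Your proposal is correct and follows essentially the same route as the paper. You bound $\|\Phi(\rho)-I_k/k\|_\hs$ by $\frac1k\sup_{A\in S_F}\|FT_AF\|$ via \eqref{eqn:D} and \eqref{eqn:trad}, bootstrap that supremum from the net bound $\|FT_{A_q}F\|\le L$ to get $\sup_{A\in S_F}\|FT_AF\|\le L/(1-\varepsilon)$, and finish with \eqref{eqn:Hbound}.
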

\begin{proof}
The entropy bound \eqref{eqn:Hbound} from Section~\ref{subsec:prelim-ta} gives
\begin{align}\label{eqn:hlower}
H(\Phi(\rho))&\ge \log k-k\|\Phi(\rho)-I_k/k\|_\hs^2.
\end{align}
Recalling $D=\Phi(\rho)-I_k/k$ and $S_F=\{A:A=A^*,\|A\|_\hs=1,\operatorname{diag}(A)=0\}$, then \eqref{eqn:D} and \eqref{eqn:trad} give
\begin{align}
\|\Phi(\rho)-I_k/k\|_\hs&\le\frac{1}{k}\sup_{A\in S_F}\|FT_AF\|.
\end{align}
Let $s^*:=\sup_{A\in S_F}\|FT_AF\|<\infty$.
We can use the bound \eqref{eqn:lbound} $\|FT_{A_q}F\|\le L$, for $A_q$ in the $\varepsilon$-net $\mathcal N_\eta$ \eqref{eqn:net}, to write
\begin{align*}
\|FT_AF\|&\le \|FT_{A_q}F\| + \|FT_{A-A_q}F\|\\
&\le L+\|A-A_q\|_\hs s^*\le L+\varepsilon s^*.\numberthis
\end{align*}
Taking the supremum over $A\in S_F$ gives $s^*\le\frac{L}{1-\varepsilon}$.
Thus \eqref{eqn:hlower} gives \eqref{eqn:lower1} as desired.
\end{proof}

We next check that the maximally entangled Bell state gives small entropy $H((\Phi\otimes\Phi)(\rho_{12}))$, as it also did for the channels in e.g. \cite{hastings2009superadditivity,collins2018haagerup}. 
The main input will be the trace bound \eqref{eqn:tauF2} for $\tau_d(F^2)$.

\begin{prop}[small $\Hmin(\Phi\otimes\Phi)$]\label{prop:2}
For any $s\in\N$, let $|\Omega_s\rangle:=s^{-1/2}\sum_{j=1}^s|j\rangle\otimes|j\rangle$ denote the Bell state on the tensor product of $s$-dimensional space.
Let $\sigma_{12}:=(\Phi\otimes\Phi)(|\Omega_d\rangle\langle\Omega_d|)$ be the output state under the 2-use channel $\Phi\otimes\Phi$. Then
\begin{align}\label{eqn:lower}
\langle\Omega_k|\sigma_{12}|\Omega_k\rangle&\ge\frac{\tau_d(F^2)^2}{k}.
\end{align}
Additionally, if $\tau_d(F^2)^2\ge c_1$ and $1/k\le c_1\le1$, then
\begin{align}\label{eqn:h12}
H(\sigma_{12})&\le 2\log k-\frac{c_1\log k-1}{k}.
\end{align}
\end{prop}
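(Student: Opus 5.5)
The plan is to prove \eqref{eqn:lower} by bounding $\langle\Omega_k|\sigma_{12}|\Omega_k\rangle$ from below by a single overlap with the Bell state. Then \eqref{eqn:h12} follows from a standard entropy bound on states with one large eigenvalue.

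\textbf{Step 1 (drop the $\Gamma$ part).} Write $\Phi=\Lambda+\Gamma$ as in the proof of Lemma~\ref{lem:channel}. Then
\[
\Phi\otimes\Phi=\Lambda\otimes\Lambda+\Lambda\otimes\Gamma+\Gamma\otimes\Lambda+\Gamma\otimes\Gamma .
\]
Each of these four maps is completely positive, so each maps $|\Omega_d\rangle\langle\Omega_d|$ to a positive semidefinite operator. Hence $\langle\Omega_k|\sigma_{12}|\Omega_k\rangle\ge\langle\Omega_k|(\Lambda\otimes\Lambda)(|\Omega_d\rangle\langle\Omega_d|)|\Omega_k\rangle$.

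\textbf{Step 2 (Stinespring form of $\Lambda\otimes\Lambda$).} Use $\Lambda(\rho)=\Tr_{\C^d}(W\rho W^*)$ with $W=k^{-1/2}\sum_i|i\rangle\otimes U_iF$. This gives
\[
(\Lambda\otimes\Lambda)(|\Omega_d\rangle\langle\Omega_d|)=\Tr_{\mathrm{env}}\bigl(|\psi\rangle\langle\psi|\bigr),\qquad |\psi\rangle:=(W\otimes W)|\Omega_d\rangle,
\]
where the environment is $\C^d\otimes\C^d$, after reordering tensor factors. For a partial trace of a rank-one operator, $\langle\Omega_k|\Tr_{\mathrm{env}}(|\psi\rangle\langle\psi|)|\Omega_k\rangle=\sum_e|(\langle\Omega_k|\otimes\langle e|)\psi\rangle|^2$ for any orthonormal basis $\{e\}$ of the environment. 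Choosing a basis containing $|\Omega_d\rangle$ and keeping only that term gives the lower bound $|(\langle\Omega_k|\otimes\langle\Omega_d|)|\psi\rangle|^2$.

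\textbf{Step 3 (compute the overlap).} We have
\[
|\psi\rangle=k^{-1}\sum_{i,j}|i\rangle|j\rangle\otimes(U_iF\otimes U_jF)|\Omega_d\rangle .
\]
Pairing with $\langle\Omega_k|$ keeps only the terms $i=j$ and contributes a factor $k^{-1/2}$. Then use $\langle\Omega_d|X\otimes Y|\Omega_d\rangle=\frac1d\Tr(X^TY)$ with $X=Y=U_iF$. Since $U_i$ is a real permutation matrix and $F$ is real symmetric by Lemma~\ref{lem:F}, we get $X^TY=FU_i^TU_iF=F^2$. Each $i$ therefore contributes $\tau_d(F^2)$, and the overlap equals $k^{-3/2}\cdot k\,\tau_d(F^2)=k^{-1/2}\tau_d(F^2)$. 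Squaring gives \eqref{eqn:lower}.

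The only delicate point is the transpose identity: it is exactly where the realness of $U_i$ and $F$ is used.

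\textbf{Step 4 (entropy bound).} Let $\lambda:=\lambda_{\max}(\sigma_{12})$. By Step 3, $\lambda\ge\langle\Omega_k|\sigma_{12}|\Omega_k\rangle\ge c_1/k$. For a $k^2$-dimensional state with top eigenvalue $\lambda$, entropy is maximized when the remaining eigenvalues are uniform. Together with $-(1-\lambda)\log(1-\lambda)\le\lambda$, this gives
\[
H(\sigma_{12})\le-\lambda\log\lambda+\lambda+(1-\lambda)\,2\log k=:2\log k+g(\lambda),
\]
where $g(\lambda)=-\lambda(2\log k+\log\lambda)+\lambda$. Its derivative is $g'(\lambda)=-(2\log k+\log\lambda)$, which is $\le0$ for $\lambda\ge1/k^2$. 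Since $\lambda\ge c_1/k\ge1/k^2$ by the assumption $c_1\ge1/k$, $g$ is nonincreasing on the relevant range and we may substitute $\lambda=c_1/k$:
\[
g(c_1/k)=-\frac{c_1\log k}{k}+\frac{c_1(1-\log c_1)}{k}.
\]
Finally, $x(1-\log x)\le1$ for $0<x\le1$, because its maximum on this interval is attained at $x=1$. This yields \eqref{eqn:h12}.
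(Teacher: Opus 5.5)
Your proof is correct. Steps 1 and 4 match the paper. In Step 4 you apply $-(1-\lambda)\log(1-\lambda)\le\lambda$ and $\log(k^2-1)\le 2\log k$ before the monotonicity argument rather than after; this is immaterial.

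The difference is in Steps 2--3. The paper computes the overlap exactly, $\langle\Omega_k|(\Lambda\otimes\Lambda)(|\Omega_d\rangle\langle\Omega_d|)|\Omega_k\rangle=\frac{1}{dk^3}\Tr(B^2)$ with $B=\sum_i U_iF^2U_i^*$. It then applies the trace Cauchy--Schwarz inequality $\Tr(B^2)\ge\frac1d(\Tr B)^2$. You instead purify $\Lambda\otimes\Lambda$ and discard every environment component except $|\Omega_d\rangle$.

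These are the same inequality in two guises. Summing over a full environment basis reproduces $\frac{1}{dk^3}\Tr(B^2)$, using realness to turn $\Tr(X^TX)$ into $\|X\|_{\hs}^2$ for $X=FU_j^*U_iF$. Your single retained term equals $\frac{1}{d^2k^3}(\Tr B)^2=\frac1k\tau_d(F^2)^2$, which is exactly the Cauchy--Schwarz projection of $B$ onto the identity direction.

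Your version avoids the matrix-entry bookkeeping and isolates the one place where realness of $U_i$ and $F$ enters, namely the transpose identity $\langle\Omega_d|X\otimes Y|\Omega_d\rangle=\frac1d\Tr(X^TY)$. The paper's version keeps an exact formula for the overlap before any inequality is applied.
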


\begin{proof}
First, recall from the proof of Lemma~\ref{lem:channel} that we write $\Phi=\Lambda+\Gamma$, with $[\Lambda(\rho)]_{ij}=\frac1k\Tr(U_iF\rho FU_j^*)$ the first term in \eqref{eqn:phi}. Both $\Lambda$ and $\Gamma$ were checked to be completely positive in the proof, so all four individual terms in the decomposition $\Phi\otimes\Phi=\Lambda\otimes\Lambda+\Lambda\otimes\Gamma+\Gamma\otimes\Lambda+\Gamma\otimes\Gamma$ are completely positive as well \cite[\S2]{watrous2018book}. Thus they map $|\Omega_d\rangle\langle\Omega_d|$ to another positive semidefinite operator, and so for the lower bound \eqref{eqn:lower} it suffices to consider only $(\Lambda\otimes\Lambda)(|\Omega_d\rangle\langle\Omega_d|)$.

Write $|\Omega_d\rangle\langle\Omega_d|=\frac1d\sum_{a,b=1}^d(|a\rangle\langle b|)\otimes(|a\rangle\langle b|)$, and note that $\langle\Omega_k|X\otimes Y|\Omega_k\rangle=\frac1k\sum_{i,j=1}^k\langle i|X|j\rangle\langle i|Y|j\rangle$. Using that $F$ and all $U_j$ are real, compute
\begin{align*}
\langle\Omega_k|(\Lambda\otimes\Lambda)(|\Omega_d\rangle\langle\Omega_d|)|\Omega_k\rangle&=\frac{1}{dk}\sum_{i,j=1}^k\sum_{a,b=1}^d[\Lambda(|a\rangle\langle b|)]_{ij}^2\\
&=\frac{1}{dk^3}\sum_{i,j=1}^k\|FU_j^*U_iF\|_\hs^2\\
&=\frac{1}{dk^3}\sum_{i,j=1}^k\Tr(U_iF^2U_i^*U_jF^2U_j^*)
=\frac{1}{dk^3}\Tr(B^2),\numberthis
\end{align*}
for $B:=\sum_{i=1}^kU_iF^2U_i^*$, which is positive semidefinite. Thus Cauchy--Schwarz $\Tr(B^2)\ge\frac1d(\Tr B)^2=\frac{k^2}{d}\Tr(F^2)^2$ implies
\begin{align}\label{eqn:lower2}
\langle\Omega_k|(\Lambda\otimes\Lambda)(|\Omega_d\rangle\langle\Omega_d|)|\Omega_k\rangle&\ge\frac{1}{k}\tau_d(F^2)^2,
\end{align}
which implies \eqref{eqn:lower}.

To obtain the entropy bound \eqref{eqn:h12}, from \eqref{eqn:lower}, we see that $\sigma_{12}=(\Phi\otimes\Phi)(|\Omega_d\rangle\langle\Omega_d|)$ has a largest eigenvalue $\lambda$ such that $\lambda\ge c_1/k$.
Once we have the largest eigenvalue $\lambda$, the entropy is maximized by the uniform distribution on the remaining eigenvalues, which have total mass $1-\lambda$. More precisely, if the other eigenvalues of $\sigma_{12}$ are $\lambda_2,\ldots,\lambda_{k^2}$, then by concavity of $f(x)=-x\log x$, Jensen's inequality gives
\begin{align}\label{eqn:jensen-unform}
\frac{1}{k^2-1}\sum_{j=2}^{k^2}f(\lambda_j)&\le f\Bigg(\frac{1}{k^2-1}\sum_{j=2}^{k^2}\lambda_j\Bigg)=f\left(\frac{1-\lambda}{k^2-1}\right),
\end{align}
see also \cite[\S5]{fukuda2010comments}. Thus
\begin{align}
H(\sigma_{12})&\le -\lambda\log\lambda-(1-\lambda)\log(1-\lambda)+(1-\lambda)\log(k^2-1)=:g(\lambda).
\end{align}
The right-hand side has derivative $g'(t)=\log\left(\frac{1-t}{t(k^2-1)}\right)$, which is $\le0$ for $t\ge1/k^2$. Since $\lambda\ge c_1/k\ge1/k^2$, letting $a:=c_1/k$ for notational convenience, we obtain
\begin{align*}
H(\sigma_{12})\le g(\lambda)\le g(c_1/k)&\le -a\log a+a+(1-a)2\log k\\
&\le2\log k-\frac{c_1}{k}\log k+\frac{1}{k},\numberthis\label{eqn:hg}
\end{align*}
where we used $(1-a)\log(1-a)+a\ge0$ for $0<a<1$ and $c_1-c_1\log c_1\le1$ for $0<c_1\le1$ (which are the same inequality).
\end{proof}

\subsection{Parameter choices and completion of the proof of Theorem~\ref{thm:main}}\label{subsec:parameters}

We now choose parameters which produce the explicit numerical values in the statement of Theorem~\ref{thm:main}, and use this to finish the proof of Theorem~\ref{thm:main}.

Combining Propositions~\ref{prop:1} and \ref{prop:2} gives
\begin{align}\label{eqn:phi-diff}
2\Hmin(\Phi)-\Hmin(\Phi\otimes\Phi)&\ge \frac1k\left[-\frac{2L^2}{(1-\varepsilon)^2}+c_1\log k-1\right],
\end{align}
where
\begin{itemize}
\item $L>3$ is to be chosen;
\item $c_1\le\tau_d(F^2)^2$ can be taken as $c_1=\left[1-2(2\eta+1)^r\left(\frac{3}{L}\right)^{2m}\right]^2$ by \eqref{eqn:tauF2}; for Proposition~\ref{prop:2} we also require $c_1\ge 1/k$;
\item $\varepsilon$ is the ball radius $\varepsilon=\sqrt{r}/\eta<1$ in the net $\mathcal N_\eta$, for $r=k(k-1)$.
\end{itemize}
The available parameters $L,k,\eta,m$ need to be chosen so that the right side of \eqref{eqn:phi-diff} is positive. 
Note that once we choose $k$ and $m$, then $r=k(k-1)$, $n=16m+8m\log_2k=272m$, $M=2^n$, and $d=|\SL_2(\Z/M\Z)|=3\cdot 2^{3n-2}$ are all determined.
We start by taking $L$ as follows, and asserting goals for $\varepsilon$ and $c_1$:
\begin{align}\label{eqn:param}
L=3.125=\frac{25}{8},\quad \varepsilon\le\frac{1}{64},\quad c_1\ge\frac{99}{100}.
\end{align}
We will choose parameters $\eta$ and $m$ last, so we can always ensure the goals for $\varepsilon$ and $c_1$ are met by taking $\eta$ and $m$ sufficiently large.
We now choose $k$ large enough so the gap in \eqref{eqn:phi-diff} is positive. With the parameters in \eqref{eqn:param}, taking $k=2^{32}$ for convenience gives
\begin{align}
c_1\log k-1-\frac{2L^2}{(1-\varepsilon)^2}&> \frac{3}{4},
\end{align}
and so the entropy gap \eqref{eqn:phi-diff} is $>\frac{3}{4k}=\frac{3}{17179869184}$.

We have $r=k(k-1)$, and we will choose $\eta:=64k\ge64\sqrt{k(k-1)}$, so $\varepsilon=\sqrt{r}/\eta\le 1/64$.
Then we will need to choose $m$ large enough so that $c_1\ge\frac{99}{100}$. Since
\begin{align*}
c_1=\left[1-2(128k+1)^{r}\left(\frac{24}{25}\right)^{2m}\right]^2,
\end{align*}
using $k=2^{32}$ it suffices to take
\begin{align}
m\ge\frac{r\log(2^{39}+1)+\log\Big(\frac{2}{1-\sqrt{\frac{99}{100}}}\Big)}{2\log(25/24)}
\end{align}
and we see taking e.g. $m=350r$ works.
From \eqref{eqn:d} and \eqref{eqn:M}, we then have $n=272m=1756130035408268427264000$ and
\begin{align}
d=3\cdot 2^{3n-2}=3\cdot 2^{5268390106224805281791998}.
\end{align}
This completes the proof of Theorem~\ref{thm:main}. \qed

\appendix
\section{\texorpdfstring{Minimum output R\'enyi-$p$ entropy for $p\ge1$}{Minimum output Rényi-p entropy for p≥1}}\label{sec:p}

In this section, we prove Corollary~\ref{cor:p} on additivity violation for $\Phi$ for the minimum output R\'enyi-$p$ entropy, any $1\le p\le\infty$. 
The proof will follow from two ingredients already proved in the proof of Theorem~\ref{thm:main}, namely that for any density matrix $\rho$ (from the proof of Proposition~\ref{prop:1}),
\begin{align}\label{eqn:2bound}
\|\Phi(\rho)-I_k/k\|_\hs\le\frac{L}{k(1-\varepsilon)}\le\frac{C_1}{k},\quad\text{for }C_1:=\frac{200}{63},
\end{align}
and (from the proof of Proposition~\ref{prop:2}), the output $\sigma_{12}=(\Phi\otimes\Phi)(|\Omega_d\rangle\langle\Omega_d|)$ has largest eigenvalue
\begin{align}\label{eqn:lambdabound}
\lambda\ge\frac{c}{k},\quad\text{for }c:=\frac{99}{100}.
\end{align}

\begin{proof}[Proof of Corollary~\ref{cor:p}]
For notational convenience, we will define
\begin{align}
A:=1+\frac{C_1^2}{k}, \quad B:=1+C_1,\quad a:=\frac{c}{k}.
\end{align}
We start by proving $H_p(\Phi(\rho))$ is large for all $\rho$. 
From \eqref{eqn:2bound} and since $\Tr(\Phi(\rho)-I_k/k)=0$, we have
\begin{align}\label{eqn:p-trace}
\|\Phi(\rho)\|\le \frac{B}{k},\quad\text{and}\quad\Tr(\Phi(\rho)^2)=\frac1k+\|\Phi(\rho)-I_k/k\|_\hs^2\le\frac{A}{k}.
\end{align}
We split into two cases, $1\le p\le2$, and $2\le p\le\infty$.
The second equality of \eqref{eqn:p-trace} gives a bound on the R\'enyi-2 entropy, so for $1\le p\le2$, we can use monotonicity of the R\'enyi-$p$ entropies in $p$ to write
\begin{align}\label{eqn:hp-1}
H_p(\Phi(\rho))\ge H_2(\Phi(\rho))\ge \log k-\log A.
\end{align}
For $p\ge2$, we have
\begin{align}
\Tr(\Phi(\rho)^p)&\le \|\Phi(\rho)\|^{p-2}\Tr(\Phi(\rho)^2)\le k^{1-p}B^{p-2}A,
\end{align}
which gives the bound
\begin{align}\label{eqn:Hp-lower-1}
H_p(\Phi(\rho))&\ge \log k-\frac{1}{p-1}\log A-\frac{p-2}{p-1}\log B.
\end{align}

Next, for $p>1$, we show $H_p((\Phi\otimes\Phi)(|\Omega_d\rangle\langle\Omega_d|))$ is still small. Given the largest eigenvalue $\lambda$ of $\sigma_{12}=(\Phi\otimes\Phi)(|\Omega_d\rangle\langle\Omega_d|)$, let the other eigenvalues of $\sigma_{12}$ be $\lambda_2,\ldots,\lambda_{k^2}$.
From a similar argument as in \eqref{eqn:jensen-unform}, using that $x\mapsto x^p$ is convex for $p\ge1$, we see that given the largest eigenvalue $\lambda$ of $\sigma_{12}$, the R\'enyi-$p$ entropy is maximized by taking the uniform distribution on the remaining eigenvalues, which have total mass $1-\lambda$.
In this case the entropy is given by $\frac{1}{1-p}\log F(\lambda)$ for $F(t):=t^p+(k^2-1)^{1-p}(1-t)^p$. By \eqref{eqn:lambdabound}, we have $\lambda\ge c/k=a$. We can check that $F'(t)\ge0$ for $t\ge1/k^2$; since $\lambda\ge a\ge c/k>1/k^2$, we see $F(\lambda)\ge F(a)$.
Since $1/(1-p)<0$, then letting
\begin{align}
\xi=\left(a,\frac{1-a}{k^2-1},\ldots\frac{1-a}{k^2-1}\right),\quad\text{for }a=\frac{c}{k},
\end{align}
we obtain
\begin{align}
H_p(\sigma_{12})\le H_p(\xi).
\end{align}

For $1\le p\le2$, it is again enough to use monotonicity of R\'enyi-$p$ entropy in $p$, which implies 
\begin{align}\label{eqn:H1-xibound}
H_p(\xi)\le H_1(\xi)\le 2\log k-\frac{c\log k-1}{k},
\end{align} 
using the estimates in \eqref{eqn:hg} in the proof of Proposition~\ref{prop:2}. Combining this with \eqref{eqn:hp-1} gives for $1\le p\le2$ and $k=2^{32}$,
\begin{align*}
2\Hpmin(\Phi)-\Hpmin(\Phi\otimes\Phi)&\ge \frac{c\log k-1}{k}-2\log A\\
&\ge \frac{c\log k-1}{k}-\frac{2C_1^2}{k}>\frac{3}{17\,179\,869\,184}.\numberthis
\end{align*}
This finishes the proof for the case $1\le p\le2$.

For $p\ge2$, we will do a more careful entropy estimate. Write
\begin{align}
H_p(\xi)&=\frac1{1-p}\log(a^p+(k^2-1)^{1-p}(1-a)^p)=2\log k-\frac{G(p)}{p-1},
\end{align}
for $G(p):=(p-1)2\log k+\log(a^p+(k^2-1)^{1-p}(1-a)^p)$. Then $G$ is convex, e.g. by direct differentiation of the logarithm of a sum of exponential terms, and considering $p\ge2$ gives $G(p)\ge G(2)+(p-2)G'(2)$.
Since $G(2)=2\log k-H_2(\xi)\ge 2\log k-H_1(\xi)$, we then have
\begin{align*}
H_p(\xi)&\le 2\log k-\frac{G(2)+(p-2)G'(2)}{p-1}\\
&\le2\log k-\frac{1}{p-1}\left(\frac{c\log k-1}{k}\right)-\frac{p-2}{p-1}G'(2),\numberthis
\end{align*}
using \eqref{eqn:H1-xibound} to bound $H_1(\xi)$.

We want to show $G'(2)$ is sufficiently large. Recalling $a=c/k$,
\begin{align*}
G'(2)&=2\log k+\frac{a^2\log a+(k^2-1)^{-1}(1-a)^2\log\left(\frac{1-a}{k^2-1}\right)}{a^2+(k^2-1)^{-1}(1-a)^2}\\
&=\frac{c^2\log(ck)+\frac{(k-c)^2}{k^2-1}\log\left(\frac{k(k-c)}{k^2-1}\right)}{c^2+\frac{(k-c)^2}{k^2-1}}.\numberthis\label{eqn:Gprime2}
\end{align*}
We can check that for $c=99/100$ and $k\ge2$,
\begin{align}
\frac{(k-c)^2}{k^2-1}<1,\quad\text{and}\quad \log\left(\frac{k(k-c)}{k^2-1}\right)\ge\log\left(1-\frac{1}{k}\right)>-\frac2k.
\end{align}
The first inequality implies the denominator of \eqref{eqn:Gprime2} is $<2$, so for $k=2^{32}$ we obtain
\begin{align}
G'(2)>\frac{c^2}{2}\log(ck)-\frac{2}{k}>10,
\end{align}
and
\begin{align}
H_p(\sigma_{12})&\le H_p(\xi)\le2\log k-\frac{1}{p-1}\left(\frac{c\log k-1}{k}\right) -10\cdot\frac{p-2}{p-1}.
\end{align}
Finally, combining with \eqref{eqn:Hp-lower-1}, we obtain for $p\ge2$,
\begin{align*}
2\Hpmin(\Phi)-\Hpmin(\Phi\otimes\Phi)&\ge\frac{1}{p-1}\left(\frac{c\log k-1}{k}\right) -\frac{2}{p-1}\log A+\frac{p-2}{p-1}(10-2\log B)\\
&> \frac{1}{p-1}\frac{3}{17\,179\,869\,184}+\frac{p-2}{p-1}(10-2\log B).\numberthis
\end{align*}
Since $B=\frac{263}{63}$, then $10-2\log B>7$, and $\frac{\alpha}{p-1}+\frac{7(p-2)}{p-1}\ge \alpha$ for any $\alpha\le7$ and $p\ge2$, which gives an entropy gap greater than $\frac{3}{17\,179\,869\,184}$, for $2\le p\le\infty$ as well.
\end{proof}

\section{Extension to large entropy gap and \texorpdfstring{$p\ge p_0>0$}{p≥p0>0}}\label{sec:ext}

In this section, we prove Theorem~\ref{thm:ext} on an explicit construction with large entropy gaps for the minimum output R\'enyi-$p$ entropy over all $p\ge p_0>0$. The construction is similar to that of Theorem~\ref{thm:main}, but replaces the free group $F_k$ with powers $[F_k^\gamma]^\ell$, which are used to increase and amplify the gap as in \cite{wang2026unbounded,zhen2026almost}. Haagerup's inequality Theorem~\ref{thm:haagerup} is replaced with the product form of Haagerup's inequality \cite[Proposition 3.2]{collins2022additivity}, and the damping factor $F$ is modified to suppress some additional terms to handle large $p$.

\subsection{Free powers}
Instead of the free group $F_k$ on $k$ generators, we consider $[F_k^\gamma]^\ell=\prod_{\beta=1}^\ell\prod_{\alpha=1}^\gamma F_k^{(\alpha,\beta)}$, and denote the generators of each $F_k^{(\alpha,\beta)}$ by $\{g_{1,\alpha,\beta},\ldots,g_{k,\alpha,\beta}\}$. The power $\gamma$ will be used to increase the entropy gap $O(1/k)$ to $\Theta(1)$ for $p\ge p_0$ as in \cite{zhen2026almost}, and then the power $\ell$ will be used to amplify this gap similar to \cite{wang2026unbounded}.

As a combination of \cite{zhen2026almost} and \cite{wang2026unbounded}, consider tuples $\s=(s_1,\ldots,s_\ell)$, where $s_\beta=(j_\beta,\alpha_\beta,\xi_\beta)$ for $j_\beta\in[k]$ choosing the generator, $\alpha_\beta\in[\gamma]$ choosing the copy of $F_k$ in $F_k^\gamma$, and $\xi_\beta\in\{\pm1\}$ an additional parameter which will indicate whether we take the generator $g_{j_\beta,\alpha_\beta,\beta}$ or its inverse $g_{j_\beta,\alpha_\beta,\beta}^{-1}$. These extra signs $\xi_\beta\in\{\pm1\}$ and intermediate power $\gamma$ will be used to be able to handle small $p_0$ similarly as in \cite{zhen2026almost}.
Note for each $\beta\in[\ell]$, there are $K:=2\gamma k$ possible values of $s_\beta$, so in total $N:=K^\ell=(2\gamma k)^\ell$ tuples $\s$.

For each $\s$, define the unitary $u_\s$ as the left-regular representation corresponding to $g_\s:=\prod_{\beta=1}^\ell g_{j_\beta,\alpha_\beta,\beta}^{\xi_\beta}$,
\begin{align}
u_\s&=\lambda_{[F_k^\gamma]^\ell}\left(\prod_{\beta=1}^\ell g_{j_\beta,\alpha_\beta,\beta}^{\xi_\beta}\right)
=\prod_{\beta=1}^\ell\lambda_{[F_k^\gamma]^\ell}(g_{j_\beta,\alpha_\beta,\beta}^{\xi_\beta}).
\end{align}
For an $N\times N$ zero-diagonal hermitian matrix $A$, define the function $f_A$ via 
\begin{align}
f_A(g)=\sum_{\substack{g_\s^{-1}g_\t=g,\\\s\ne\t}}A_{\s\t},
\end{align}
with $f_A(g)=0$ if the sum is empty. Note the sum in the definition of $f_A$, which ensures it is well-defined, is introduced due to $\ell\ge2$ and the signs $\xi_\beta\in\{\pm1\}$.
We can define $T_{A,\mathrm{free}}$ as in Section~\ref{sec:prelim}, as well as a new operator $L_{v,\free}$ for $v=(v_\s)\in\C^N$,
\begin{align}\label{eqn:op-ext}
T_{A,\mathrm{free}}:=\lambda_{[F_k^\gamma]^\ell}(f_A)=\sum_{\s\ne \t}A_{st}u_\s^*u_\t,\qquad L_{v,\mathrm{free}}:=\lambda_{[F_k^\gamma]^\ell}(h_v)=\sum_\s v_\s u_\s,
\end{align}
for $h_v(g_\s):=v_\s$ and zero otherwise.
The new operator $L_{v,\mathrm{free}}$ will correspond to a finite-dimensional operator $L_v$, which will be used in the new damping factor $F$ to control large eigenvalues of the channel output at large $p$.

Since we work with the product $[F_k^\gamma]^\ell$, we will need the product version of Haagerup's inequality from \cite[Proposition 3.2]{collins2022additivity}. 

\begin{thm}[product version of Haagerup's inequality {\cite[Proposition 3.2]{collins2022additivity}}]\label{thm:haagerup-product}
Consider $F_k^q$, and for $\mathbf n=(n_1,\ldots,n_q)\in\N_0^q$, let $E_{\mathbf n}:=\{(g_1,\ldots,g_q)\in F_k^q:|g_i|=n_i\;\forall i\}$.
Then for $f\in\ell^2(F_k^q)$ supported in $E_{\mathbf n}$ and $\lambda(f):=\sum_{g\in F_k^q}f(g)\lambda(g)$, there is the operator norm bound
\begin{align}
\|\lambda(f)\|\le\left[\prod_{i=1}^q(n_i+1)\right]\|f\|_2.
\end{align}
\end{thm}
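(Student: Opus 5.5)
This is the product form of Haagerup's inequality, and I would prove it by induction on $q$, using the one-factor inequality (Theorem~\ref{thm:haagerup}) as the base case. The left-regular representation of a product group factors as a tensor product: $\ell^2(F_k^q)\cong\ell^2(F_k^{q-1})\otimes\ell^2(F_k)$, and under this identification $\lambda(g',g_q)=\lambda'(g')\otimes\lambda_1(g_q)$. Let $E_{\mathbf n'}$ denote the words of length profile $\mathbf n'=(n_1,\ldots,n_{q-1})$ in $F_k^{q-1}$. For $f$ supported in $E_{\mathbf n}$ I would write
\begin{align*}
\lambda(f)=\sum_{|h|=n_q}\lambda'(f_h)\otimes\lambda_1(h),\qquad f_h(g'):=f(g',h),
\end{align*}
so that each $f_h$ is supported in $E_{\mathbf n'}$.

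The tensor form alone does not suffice. The triangle inequality would give $\sum_h\|\lambda'(f_h)\|$, which is an $\ell^1$ bound rather than an $\ell^2$ bound. I would therefore run Haagerup's original argument directly on the last coordinate, with operator-valued coefficients. Recall the proof of Theorem~\ref{thm:haagerup}: to bound $\langle \lambda(f)\phi,\psi\rangle$ for $\phi$ supported on words of length $m$ and $\psi$ on words of length $m'$, one writes each product $hx$ with $|h|=n$, $|x|=m$, $|hx|=m'$ as a reduced concatenation $h=ab$, $x=b^{-1}c$ with $|b|=j$ canceling letters. Only $n+1$ choices of $j$ contribute, and for each $j$ Cauchy--Schwarz over the pieces $a,b,c$ gives the bound $\|f\|_2\|\phi\|_2\|\psi\|_2$. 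Decomposing general $\phi,\psi$ by length and summing over the allowed band of $m,m'$ then yields $(n+1)\|f\|_2\|\phi\|_2\|\psi\|_2$.

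I would replay this argument for $F_k$-valued positions but with coefficients in $\mathcal B(\ell^2(F_k^{q-1}))$ and vectors in $\ell^2(F_k;\ell^2(F_k^{q-1}))$. Concretely, I would take $\phi,\psi$ supported on last-coordinate length $m$ and $m'$, with values $\phi(x),\psi(y)\in\ell^2(F_k^{q-1})$. The same $a,b,c$ bookkeeping bounds each $j$-term by
\begin{align*}
\sum_{a,c}\Big|\sum_b\big\langle\lambda'(f_{ab})\phi(b^{-1}c),\psi(ac)\big\rangle\Big|.
\end{align*}
The induction hypothesis gives $\|\lambda'(f_{ab})\|\le\prod_{i<q}(n_i+1)\|f_{ab}\|_2$. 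Plugging this operator-norm bound in, the scalar Cauchy--Schwarz steps go through verbatim with $|f(ab)|$ replaced by $\prod_{i<q}(n_i+1)\|f_{ab}\|_2$. Since $\sum_h\|f_h\|_2^2=\|f\|_2^2$, each $j$-term is at most
\begin{align*}
\prod_{i<q}(n_i+1)\,\|f\|_2\|\phi\|\|\psi\|.
\end{align*}
Summing the $n_q+1$ values of $j$, and doing the same orthogonal length decomposition of $\phi,\psi$ as in the scalar case, then gives the factor $(n_q+1)$ and closes the induction.

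The main obstacle is checking that Haagerup's scalar proof really is insensitive to the coefficients being operators. In particular, the last step, assembling the length-graded pieces, uses orthogonality of different lengths $m$ together with the fact that only $m'\in[|m-n_q|,m+n_q]$ with fixed parity interact. That step needs a Cauchy--Schwarz over $(m,m')$ which only uses the norms of the pieces. I expect it to carry over, because the only scalar fact used, $|f(ab)|\,|\langle\cdot,\cdot\rangle|$, becomes $\|\lambda'(f_{ab})\|\,\|\cdot\|\|\cdot\|$, which is exactly what the induction hypothesis provides.
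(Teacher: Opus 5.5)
The paper gives no proof of this statement. It is quoted from \cite[Proposition 3.2]{collins2022additivity} and only used, in Lemma~\ref{lem:ph}. Your argument is therefore an independent, self-contained route, and it is correct.

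The step you flag as the main obstacle is harmless. Reparametrize pairs $(h,x)$ with $|hx|=n_q+|x|-2j$ bijectively by triples $(a,b,c)$ with $ab$, $b^{-1}c$, $ac$ all reduced. After that, Haagerup's argument uses only three things: the triangle inequality, Cauchy--Schwarz, and injectivity of $(a,b)\mapsto ab$, $(b,c)\mapsto b^{-1}c$, $(a,c)\mapsto ac$ at fixed prefix length. So it is legitimate to replace $|f(ab)|,|\phi(x)|,|\psi(y)|$ by $\|\lambda'(f_{ab})\|\le C\|f_{ab}\|_2$, $\|\phi(x)\|$, $\|\psi(y)\|$, where $C=\prod_{i<q}(n_i+1)$. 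The identity $\sum_h\|f_h\|_2^2=\|f\|_2^2$ then closes the induction.

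In effect you prove the weak operator-coefficient inequality $\|\sum_{|h|=n}a_h\otimes\lambda(h)\|\le(n+1)(\sum_h\|a_h\|^2)^{1/2}$ and feed in the induction hypothesis. Equivalently, you can drop the induction and run the $(a,b,c)$ decomposition in all $q$ coordinates at once. Index by cancellation profiles $\mathbf j\in\prod_i\{0,\ldots,n_i\}$; each profile contributes at most $\|f\|_2\|\phi\|_2\|\psi\|_2$. Compared with the bare citation, your route shows why the constants multiply coordinatewise. It also makes the base case unnecessary: starting at $q=0$, the inductive step already reproves Theorem~\ref{thm:haagerup}.

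One bookkeeping point needs fixing in the write-up. For fixed lengths $(m,m')$ the cancellation length $j=(n_q+m-m')/2$ is determined, so the factor $n_q+1$ must be collected exactly once. You can do this in either of two ways:
\begin{enumerate}[(i)]
\item sum the $j$-terms with $\phi,\psi$ arbitrary; no length grading is needed, since $a$ and $b^{-1}$ are recovered as prefixes of fixed length;
\item use the band Cauchy--Schwarz over $(m,m')$.
\end{enumerate}
Your phrase ``summing the $n_q+1$ values of $j$, and doing the same orthogonal length decomposition'' reads as doing both, which would give $(n_q+1)^2$.
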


Similar to the methods of \cite[Lemma 2.1]{wang2026unbounded} and \cite[Proposition 5.1]{zhen2026almost}, we can apply Theorem~\ref{thm:haagerup-product} to obtain bounds on the operators in \eqref{eqn:op-ext}.
\begin{lem}[consequence of {\cite[Proposition 3.2]{collins2022additivity}}]\label{lem:ph}
Consider $[F_k^\gamma]^\ell$, and let $K=2\gamma k$ and $N=K^\ell$.
For $N\times N$ zero-diagonal hermitian $A$,
\begin{align}
\|T_{A,\mathrm{free}}\|\le J\|A\|_\hs,\quad\text{for } J=\sqrt{(K+16\gamma^2)^\ell-K^{\ell}},
\end{align}
and for $v\in\C^N$,
\begin{align}\label{eqn:Lv-haagerup}
\|L_{v,\mathrm{free}}\|\le(2\sqrt{\gamma})^\ell\|v\|_2.
\end{align}
\end{lem}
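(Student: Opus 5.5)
The plan is to decompose each function into pieces supported in a single set $E_{\mathbf n}$ of Theorem~\ref{thm:haagerup-product}, apply that theorem to each piece, and recombine with the triangle inequality and Cauchy--Schwarz. Throughout, view $[F_k^\gamma]^\ell$ as $F_k^{q}$ with $q=\gamma\ell$ copies indexed by $(\alpha,\beta)$. The key observation is that the factor $g_{j_\beta,\alpha_\beta,\beta}^{\xi_\beta}$ lives in copy $(\alpha_\beta,\beta)$. Hence factors with different $\beta$ commute, and everything splits coordinatewise over $\beta$.

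\emph{Bound for $L_{v,\free}$.} The map $\s\mapsto g_\s$ is injective, since the reduced word $g_j^{\pm1}$ in copy $(\alpha,\beta)$ determines $(j,\alpha,\xi)$. So $\|h_v\|_2=\|v\|_2$. For each choice of copies $\boldsymbol\alpha=(\alpha_1,\ldots,\alpha_\ell)\in[\gamma]^\ell$, let $v_{\boldsymbol\alpha}$ be the restriction of $v$ to tuples $\s$ with these $\alpha_\beta$. Then $h_{v_{\boldsymbol\alpha}}$ is supported in $E_{\mathbf n}$, where $\mathbf n$ has a $1$ in copies $(\alpha_\beta,\beta)$ and $0$ elsewhere. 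Theorem~\ref{thm:haagerup-product} gives $\|L_{v_{\boldsymbol\alpha},\free}\|\le 2^\ell\|v_{\boldsymbol\alpha}\|_2$. Summing over the $\gamma^\ell$ choices and using Cauchy--Schwarz gives
\[
\|L_{v,\free}\|\le 2^\ell\sum_{\boldsymbol\alpha}\|v_{\boldsymbol\alpha}\|_2\le 2^\ell\gamma^{\ell/2}\|v\|_2,
\]
which is \eqref{eqn:Lv-haagerup}.

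\emph{Bound for $T_{A,\free}$.} Write $g_\s^{-1}g_\t=\prod_\beta (g_{s_\beta})^{-1}g_{t_\beta}$ coordinatewise. For each $\beta$, classify the pair $(s_\beta,t_\beta)$ into one of three types:
\begin{enumerate}[(a)]
\item $s_\beta=t_\beta$, which gives the identity;
\item $s_\beta\ne t_\beta$ with $\alpha_{s_\beta}=\alpha_{t_\beta}$, which gives a reduced word of length exactly $2$ in one copy (either $g_j^{-\xi}g_{j'}^{\xi'}$ with $j\ne j'$, or $g_j^{\pm2}$);
\item $\alpha_{s_\beta}\ne\alpha_{t_\beta}$, which gives length $1$ in two different copies.
\end{enumerate}
In cases (b) and (c) the group element determines $(s_\beta,t_\beta)$. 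So the only non-injectivity of $(\s,\t)\mapsto g_\s^{-1}g_\t$ comes from the coordinate set $S=\{\beta:s_\beta=t_\beta\}$. There, $K^{|S|}$ pairs collapse to the same element, so $|f_A(g)|^2\le K^{|S|}\sum|A_{\s\t}|^2$ over the preimages.

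Now index "patterns" $p$ by a choice, for each $\beta$, of one of the following, and require $S\ne[\ell]$ since $\s\ne\t$:
\begin{itemize}
\item type (a), with weight $K$;
\item type (b) together with the copy $\alpha$ ($\gamma$ choices), with Haagerup factor $3$;
\item type (c) together with the ordered pair of copies ($\gamma(\gamma-1)$ choices), with Haagerup factor $2\cdot2=4$.
\end{itemize}
Each piece $f_p$ (the restriction of $f_A$ to pattern $p$, equivalently built from $A_p$, the restriction of $A$ to pairs of pattern $p$) is supported in one $E_{\mathbf n}$. Theorem~\ref{thm:haagerup-product} then gives $\|\lambda(f_p)\|\le c_p\|f_p\|_2\le c_pK^{|S_p|/2}\|A_p\|_\hs$, where $c_p=\prod_{\beta\notin S_p}(3\text{ or }4)$. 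Cauchy--Schwarz then yields
\[
\|T_{A,\free}\|\le\Big(\sum_p c_p^2K^{|S_p|}\Big)^{1/2}\|A\|_\hs .
\]
The sum factorizes over $\beta$, with each factor equal to $K+9\gamma+16\gamma(\gamma-1)\le K+16\gamma^2$. Subtracting the excluded all-(a) term $K^\ell$ gives $J^2$.

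The main obstacle is the injectivity and collision bookkeeping. One must check carefully that, with the signs $\xi$, distinct pairs of types (b) and (c) never produce the same reduced element; for example, $g_j^{-\xi}g_j^{\xi'}$ with $\xi'=-\xi$ gives $g_j^{\pm2}$, which is still distinct and of length $2$. One must also check that the well-definedness sum in $f_A$ only merges the diagonal coordinates. If some extra collision appeared, I would absorb it by the same Cauchy--Schwarz step, at the cost of a slightly worse constant.
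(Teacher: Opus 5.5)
Your proposal is correct and follows the paper's proof essentially step for step: the same splitting over $\boldsymbol\alpha\in[\gamma]^\ell$ with product Haagerup and Cauchy--Schwarz for $L_{v,\free}$, and the same coordinatewise three-type decomposition for $T_{A,\free}$. The only difference is in type (c), where the paper groups by unordered pairs $D_{\{\alpha,\alpha'\}}$ and pays multiplicity $2$ for the genuine collision $(s_\beta,t_\beta)\sim(t_\beta^{-1},s_\beta^{-1})$, whereas your ordered patterns make each piece injective; note that your sentence saying the element alone determines $(s_\beta,t_\beta)$ in case (c) holds only once the ordered pair of copies is fixed, but since $32\binom{\gamma}{2}=16\gamma(\gamma-1)$, both bookkeepings give the identical factor $K+16\gamma^2-7\gamma$.
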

\begin{proof}
We start with $L_{v,\free}$ since it is simpler. 
Recall $\s=(s_1,\ldots,s_\ell)$ with $s_\beta=(j_\beta,\alpha_\beta,\xi_\beta)$ for $j_\beta\in[k]$ and $\alpha_\beta\in[\gamma]$.
In order to apply Theorem~\ref{thm:haagerup-product} to $L_{v,\free}$, we will split it into operators $L_{v,\free}^\alpha$, each of which is supported on a single subspace $E_\mathbf{n}\subset[F_k^\gamma]^\ell\cong F_k^{\gamma\ell}$, for $\mathbf n=(n_{\alpha',\beta})_{\alpha',\beta}\in\{0,1\}^{\gamma\ell}$.
Let $\boldsymbol\alpha(\s):=(\alpha_1,\cdots,\alpha_\ell)\in[\gamma]^\ell$, and partition $L_{v,\free}$ as
\begin{align}
L_{v,\free}=\sum_{\alpha\in[\gamma]^\ell}\sum_{\s:\boldsymbol\alpha(\s)=\alpha}v_\s u_\s=\sum_{\alpha\in[\gamma]^\ell}L_{v,\free}^{(\alpha)}.
\end{align}
Then $L_{v,\free}^{(\alpha)}$ is supported in $E_{\mathbf n}$ for $\mathbf n=(n_{\alpha',\beta})_{\alpha',\beta}$ defined as $n_{\alpha',\beta}=1$ if $\alpha'=\alpha_\beta$, and 0 otherwise.
We can write $L_{v,\free}^{(\alpha)}=\lambda_{[F_k^\gamma]^\ell}(h^{(\alpha)})$, where $h^{(\alpha)}(g_\s):=v_\s$ for $\s$ such that $\boldsymbol\alpha(\s)=\alpha$, and zero otherwise. 
Applying Haagerup's product inequality Theorem~\ref{thm:haagerup-product} with $q=\gamma\ell$ to $L_{v,\free}^{(\alpha)}$ then gives
\begin{align}
\|L_{v,\free}^{(\alpha)}\|= \|\lambda(h^{(\alpha)})\|\le\left(\prod_{\beta=1}^\ell2\right)\|h^{(\alpha)}\|_2=2^\ell\|h^{(\alpha)}\|_2=2^\ell\|v^{(\alpha)}\|_2,
\end{align}
for $(v^{(\alpha)})_\s:=v_\s$ if $\boldsymbol\alpha(\s)=\alpha$ and $0$ otherwise.
Then summing over $\alpha\in[\gamma]^\ell$ with the triangle inequality and Cauchy--Schwarz gives
\begin{align}
\|L_{v,\free}\|&\le \sum_{\alpha\in[\gamma]^\ell}2^\ell\|v^{(\alpha)}\|_2\le 2^\ell\left(\sum_{\alpha\in[\gamma]^\ell}\|v^{(\alpha)}\|_2^2\right)^{1/2}\gamma^{\ell/2}=(2\sqrt{\gamma})^{\ell}\|v\|_2,
\end{align}
which is \eqref{eqn:Lv-haagerup}.

For $T_{A,\free}$, the support of $f_A$ is more complicated. The method below is similar to \cite[Proposition 5.1]{zhen2026almost}, but with the additional $\ell$ factors. Consider a fixed $\beta\in[\ell]$. For a pair $(s_\beta,t_\beta)$, there are three possibilities:
\begin{enumerate}
\item $E$: $s_\beta=t_\beta$
\item $S_\alpha$: $s_\beta\ne t_\beta$ but they are in the same factor $F_k$ in $F_k^\gamma$, indexed by $\alpha\in[\gamma]$.
\item $D_{\{\alpha,\alpha'\}}$: $s_\beta$ and $t_\beta$ lie in distinct copies of $F_k$ in $F_k^\gamma$, corresponding to indices $\alpha,\alpha'\in[\gamma]$.
\end{enumerate}
The possible types for the pair $(s_\beta,t_\beta)$ are then $\omega\in\{E\}\cup\{S_\alpha\}_{\alpha=1}^\gamma\cup\{D_{\alpha,\alpha'}\}$.
A full pair $(\s,\t)$ can thus be described by a pattern $\omega=(\omega_1,\ldots,\omega_\ell)$ in $(\{E\}\cup\{S_\alpha\}_{\alpha=1}^\gamma\cup\{D_{\alpha,\alpha'}\}_{\alpha<\alpha'})^\ell$. Split $f_A$ as
\begin{align}
f_A(g)=\sum_{\omega\ne(E,\ldots,E)}\sum_{\substack{g_\s^{-1}g_\t=g\\(\s,\t)\text{ type }\omega}}A_{\s\t}=:\sum_{\omega\ne(E,\ldots,E)}f_{A,\omega}(g).
\end{align}
The function $f_{A,\omega}$ is supported on a single $E_{\mathbf n}\subset\prod_{\alpha=1}^\gamma\prod_{\beta=1}^\ell F_k^{(\alpha,\beta)}\cong F_k^{\gamma\ell}$ for $\mathbf n=(n_{\alpha,\beta})_{\alpha,\beta}\in\{0,1,2\}^{\gamma\ell}$ depending on $\omega$.
The operator $T_{A,\free}$ splits as $T_{A,\free}=\sum_{\omega\ne(E,\ldots,E)}\lambda(f_{A,\omega})$, and Theorem~\ref{thm:haagerup-product} applied with $q=\gamma\ell$ gives
\begin{align}\label{eqn:lf-1}
\|\lambda(f_{A,\omega})\|&\le 3^{s(\omega)}4^{d(\omega)}\|f_{A,\omega}\|_2,
\end{align}
where we let
\begin{enumerate}
\item $e(\omega):=\#\{\omega_\beta=E\}$ (for later use);
\item $s(\omega):=\#\{\omega_\beta\in \{S_\alpha\}_\alpha\}$ 
(contributing a factor $2+1=3$);
\item $d(\omega):=\#\{\omega_\beta\in\{D_{\alpha,\alpha'}\}_{\alpha,\alpha'}\}$ (contributing a factor $(1+1)(1+1)=4$).
\end{enumerate}
We next relate $\|f_{A,\omega}\|_2$ to $\|A_\omega\|_\hs$, for $A_\omega$ defined as $(A_\omega)_{\s\t}:=A_{\s\t}$ if $(\s,\t)$ has type $\omega$, and 0 otherwise. For this, we need to count how many $(\s,\t)$ there are of type $\omega$ with $g_\s^{-1}g_\t=g$. For a fixed $\beta$ and $s_\beta=(j_\beta,\alpha_\beta,\xi_\beta)$, let $g_{s_\beta}:=g_{j_\beta,\alpha_\beta,\beta}^{\xi_\beta}$, which we will temporarily also denote by $s_\beta$. Then we can count:
\begin{enumerate}
\item If $\omega_\beta=E$, then there are $K=2k\gamma$ choices for the value of $s_\beta=t_\beta$.
\item If $\omega_\beta=S_\alpha$, then the map $(s_\beta,t_\beta)\mapsto s_\beta^{-1}t_\beta$ is injective, so the multiplicity is 1.
\item If $\omega_\beta= D_{\{\alpha,\alpha'\}}$, then $s_\beta$ and $t_\beta$ commute, so both $(s_\beta,t_\beta)$ and $(t_\beta^{-1},s_\beta^{-1})$ give the same $g$, and the multiplicity is 2.
\end{enumerate}
Applying Cauchy--Schwarz with these multiplicities gives
\begin{align}
|f_{A,\omega}(g)|^2=\Bigg|\sum_{\substack{g_\s^{-1}g_\t=g\\(\s,\t)\text{ type }\omega}}A_{\s\t}\Bigg|^2&\le K^{e(\omega)}2^{d(\omega)}\sum_{\substack{g_\s^{-1}g_\t=g\\(\s,\t)\text{ type }\omega}}|A_{\s\t}|^2. 
\end{align}
Thus summing over $g\in [F_k^\gamma]^\ell$ and using \eqref{eqn:lf-1}, we have
\begin{align}
\|\lambda(f_{A,\omega})\|^2&\le K^{e(\omega)}9^{s(\omega)}32^{d(\omega)}\|A_\omega\|_\hs^2.
\end{align}
This gives
\begin{align*}
\|T_{A,\free}\|&\le\sum_{\omega\ne(E,\ldots,E)}\|\lambda(f_{A,\omega})\|\\
&\le \sum_{\omega\ne(E,\ldots,E)}K^{e(\omega)/2}3^{s(\omega)}32^{d(\omega)/2}\|A_\omega\|_\hs\\
&\le \left(\sum_{\omega\ne(E,\ldots,E)}K^{e(\omega)}9^{s(\omega)}32^{d(\omega)}\right)^{1/2}\|A\|_\hs.\numberthis
\end{align*}
We just need to evaluate the sum over $\omega$. For each $\beta$, we count how many choices there are for $\omega_\beta$ depending on its type. For $E$, there is only 1 choice of $E$. For $S_\alpha$, there are $\gamma$ choices of $\alpha$. For $D_{\{\alpha,\alpha'\}}$, there are $\tbinom{\gamma}{2}$ choices of $\{\alpha,\alpha'\}$. Thus using the multinomial theorem, we obtain, summing over all $\omega$,
\begin{align}
\sum_{\omega}K^{e(\omega)}9^{s(\omega)}32^{d(\omega)}&=\sum_{e+s+d=\ell}K^e9^s32^d\gamma^s\binom{\gamma}{2}^d\binom{\ell}{e,s,d}=\left(K+9\gamma+32\binom{\gamma}{2}\right)^\ell.
\end{align}
Using $9\gamma+32\tbinom\gamma2=16\gamma^2-7\gamma\le 16\gamma^2$, and subtracting $K^\ell$ to remove the $\omega=(E,\ldots,E)$ term, finishes the proof of Lemma~\ref{lem:ph}.
\end{proof}

\subsection{Explicit construction} \label{subsec:ext-construction}
Next, we reduce $[F_k^\gamma]^\ell$ to finite dimensions.
As in Section~\ref{subsec:construction}, we use the Sanov representation \eqref{eqn:sanov} of $F_2$ in $\SL_2(\Z)$. For each copy $(\alpha,\beta)$ of $F_k$, we use generators $\{a_{j,\alpha,\beta}\}_{j=1}^k$ which are defined as in \eqref{eqn:aj}, and which commute with generators from other copies.
For a parameter $m$ to be chosen later in Appendix~\ref{subsec:param-ext}, we consider the matrix entries modulo a large integer $M=2^n$, also to be chosen later, so that nontrivial reduced words in the $a_{j,\alpha,\beta}$ of length $\le 4 m$ in each $F_k^{(\alpha,\beta)}$ remain nontrivial modulo $M$. From \eqref{eqn:M} of Section~\ref{subsec:construction}, we see we can take 
\begin{align}\label{eqn:M-ext}
M:=(16k^2)^{4m}=2^n,\quad n=16m+8m\log_2k,
\end{align}
and the resulting group is
\begin{align}\label{eqn:G-ext}
G:=\SL_2(\Z/M\Z)^{\gamma\ell},\quad\text{which has size}\quad D:=|\SL_2(\Z/M\Z)|^{\gamma\ell}=(3\cdot 2^{3n-2})^{\gamma\ell}.
\end{align}
The unitaries $W_\s$ for the channel are defined via the left-regular representation $\lambda_G$ of $G$, as
\begin{align}
W_\s:=\prod_{\beta=1}^\ell\lambda_G(a_{j_\beta,\alpha_\beta,\beta}^{\xi_\beta}),
\end{align}
with normalized trace $\tau_D(X):=\frac1D\Tr(X)$.
As in Lemma~\ref{lem:moments}, the choice of $M$ ensures that nontrivial reduced words in the $a_{j_\beta,\alpha_\beta,\beta}^{\pm1}$ of length at most $4m$ in each $F_k^{(\alpha,\beta)}$ remain nontrivial in $G$.

The channel $\Psi$ in \eqref{eqn:phi2} is the same form as $\Phi$ in \eqref{eqn:phi}, but the damping term will be $F=(I+R)^{-1}$ with $R$ having an extra term. In addition to suppressing subspaces where $T_A$ is large, we will want to control the largest eigenvalue of $\Psi(\rho)$ when considering large $p$.\footnote{If one only wanted to consider large entropy gap at e.g. $p=1$, then it should not be necessary to define $L_v$.} To this end, define
\begin{align}
T_A:=\sum_{\s,\t}A_{\s\t}W_\s^*W_\t,\qquad L_v:=\sum_\s v_\s W_\s.
\end{align}

Consider $A$ hermitian, zero-diagonal, and normalized to $\|A\|_\hs=1$, and $v$ with $\|v\|_2=1$.
By the choice of $M$, there are then the moment matching bounds combined with Lemma~\ref{lem:ph}, giving
\begin{align}\label{eqn:moment-ext}
\begin{aligned}
\tau_D(T_A^{2m})&=\tau_{\mathrm{free}}(T_{A,\mathrm{free}}^{2m})\le \|T_{A,\mathrm{free}}^{2m}\|\le J^{2m},\\
\tau_D((L_v^*L_v)^{2m})&=\tau_{\mathrm{free}}((L_{v,\mathrm{free}}^*L_{v,\mathrm{free}})^{2m})\le \|(L_{v,\mathrm{free}}^*L_{v,\mathrm{free}})^{2m}\|\le (4\gamma)^{2\ell m}.
\end{aligned}
\end{align}

To construct the damping term $F$, we will use the net $\mathcal N_h$ from \eqref{eqn:net}, which is an $\varepsilon:=\sqrt{r}/h$-net for $r=N(N-1)$, and the similar net in $\C^N$, $\mathcal M_h:=\{z/\|z\|_2:0\ne z\in(\Z+i\Z)^N:|\Re z|,|\Im z|\le h\}$ for $L_v$. We will later take $\varepsilon\le1/8$.
Because we will want to control the largest eigenvalues of $\Psi(\rho)$ in Proposition~\ref{prop:1ext} for large $p$, we add an extra term to $R$ compared to \eqref{eqn:R}, defining, for $L>J$ and $B>(4\gamma)^\ell$ to be chosen in Appendix~\ref{subsec:param-ext},
\begin{align}\label{eqn:R-ext}
R:=\sum_{A_q\in\mathcal N_h}\left(\frac{T_{A_q}}{L}\right)^{2m}+\sum_{v\in\mathcal M_h}\left(\frac{L_{v}^*L_{v}}{B}\right)^{2m},\quad F:=(I+R)^{-1}.
\end{align}
We will see in \eqref{eqn:output-norm} how this controls the large eigenvalues of $\Psi(\rho)$ for any input $\rho$.
Analogous to \eqref{eqn:tauF2}, using the moment matching \eqref{eqn:moment-ext} and product Haagerup's inequality Lemma~\ref{lem:ph}, we have
\begin{align}\label{eqn:tauR-ext}
\tau_D(R)&\le (2h+1)^r\left(\frac{J}{L}\right)^{2m}+(2h+1)^{2N}\left(\frac{(4\gamma)^\ell}{B}\right)^{2m}.
\end{align}
Additionally, $0<F\le1$ and is real, as before, and the same functional calculus in Lemma~\ref{lem:F} gives $\|FT_{A_q}F\|\le L$.

We let $0<\eta\le1/2$ be a convenient parameter to be specified in Appendix~\ref{subsec:param-ext}.
We will later choose $m$ (depending on $h,J,L,\gamma,\ell,B,N,\eta$) in Appendix~\ref{subsec:param-ext} so that \eqref{eqn:tauR-ext} ensures
\begin{align}\label{eqn:tau-eta}
\tau_D(R)\le \frac{1}{64}\eta^2.
\end{align}

\subsection{Entropy bounds}

In this section, we prove large $H_p(\Psi(\rho))$ for every $\rho$, and small $H_p((\Psi\otimes\Psi)(\rho_{12}))$ for the entangled Bell state $\rho_{12}$. 

The proof of large $H_p(\Psi(\rho))$ for every $\rho$ is nearly the same as in Proposition~\ref{prop:1} and Corollary~\ref{cor:p}, with Lemma~\ref{lem:ph} replacing Haagerup's inequality Theorem~\ref{thm:haagerup}. The main difference is that due to considering large entropy gaps for large $p$ and because $L$ scales differently, it turns out we will want to use the largest eigenvalue suppression from $L_v$ to get a better bound on $\|\Psi(\rho)\|$ compared to the method leading to \eqref{eqn:p-trace}.

\begin{prop}[large $\Hpmin(\Psi)$]\label{prop:1ext}
Let 
\begin{align}\label{eqn:ab}
A_\ell:=1+\frac{L^2}{N(1-\varepsilon)^2},\quad B_\ell:=1+\frac{B}{(1-\varepsilon)^2},
\end{align}
where $L$ and $B$ are as in the definition \eqref{eqn:R-ext} of $R$, and will depend on $\ell$.
Then for any density matrix $\rho$,
\begin{align}
H_p(\Psi(\rho))&\ge \begin{cases}\log N-\log A_\ell,&0<p\le 2\\
\log N-\frac{1}{p-1}\log A_\ell-\frac{p-2}{p-1}\log B_\ell,&p\ge2
\end{cases}.
\end{align}
\end{prop}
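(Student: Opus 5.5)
The plan is to rerun the argument of Proposition~\ref{prop:1} and Corollary~\ref{cor:p} with two changes. Lemma~\ref{lem:ph} replaces Haagerup's inequality. The crude bound $\|\Phi(\rho)\|\le B/k$ of \eqref{eqn:p-trace} is replaced by a bound on $\|\Psi(\rho)\|$ coming from the new $L_v$ terms in $R$. Concretely, I will prove two estimates valid for every density matrix $\rho$:
\begin{align*}
\Tr(\Psi(\rho)^2)\le\frac{A_\ell}{N},\qquad \|\Psi(\rho)\|\le\frac{B_\ell}{N}.
\end{align*}
The entropy bounds then follow formally. For $0<p\le2$, monotonicity of $H_p$ in $p$ gives $H_p\ge H_2=-\log\Tr(\Psi(\rho)^2)\ge\log N-\log A_\ell$. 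For $p\ge2$, the inequality $\Tr(\sigma^p)\le\|\sigma\|^{p-2}\Tr(\sigma^2)\le N^{1-p}B_\ell^{p-2}A_\ell$ gives the stated bound after taking $\frac{1}{1-p}\log$. The case $p=\infty$ is the limit.

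\emph{Step 1 (Hilbert--Schmidt bound).} The diagonal entries of $\Psi(\rho)$ are all $1/N$, so \eqref{eqn:D} and the computation \eqref{eqn:trad} apply verbatim with $k$ replaced by $N$. This gives $\|\Psi(\rho)-I_N/N\|_\hs\le\frac1N\sup_{A\in S_F}\|FT_AF\|$. On the net $\mathcal N_h$ we have $\|FT_{A_q}F\|\le L$, by the functional calculus argument of Lemma~\ref{lem:F}: since $|t|\le1+t^{2m}$, we get $\pm T_{A_q}\le L(I+R)$, because the additional $L_v$ terms in $R$ are positive. Linearity of $A\mapsto T_A$ and the $\varepsilon$-net argument of Proposition~\ref{prop:1} then give $\sup_{A\in S_F}\|FT_AF\|\le L/(1-\varepsilon)$. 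It follows that $\Tr(\Psi(\rho)^2)=\frac1N+\|\Psi(\rho)-I/N\|_\hs^2\le\frac1N\bigl(1+\frac{L^2}{N(1-\varepsilon)^2}\bigr)=A_\ell/N$.

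\emph{Step 2 (operator norm bound).} This is the new ingredient. For a unit vector $x\in\C^N$, expand the quadratic form using \eqref{eqn:phi2}:
\begin{align*}
\langle x|\Psi(\rho)|x\rangle=\frac1N\Tr\bigl(\rho\,F L_v^*L_vF\bigr)+\frac1N\Tr((I-F^2)\rho).
\end{align*}
Here $v$ is $x$ up to entrywise complex conjugation, so $\|v\|_2=1$. The second term is at most $1/N$, since $0\le I-F^2\le I$. It therefore suffices to show $\|L_vF\|^2\le B/(1-\varepsilon)^2$ for all unit $v$.

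For $v\in\mathcal M_h$, the inequality $t\le1+t^{2m}$ for $t\ge0$, applied to the positive operator $L_v^*L_v/B$, gives $L_v^*L_v\le B(I+R)$. Conjugating by $F$ gives $FL_v^*L_vF\le BF\le B$, so $\|L_vF\|\le\sqrt B$.

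To pass to all unit $v$, set $s^*:=\sup_{\|v\|=1}\|L_vF\|$ and use linearity of $v\mapsto L_v$: $\|L_vF\|\le\sqrt B+\|v-v'\|_2\,s^*$ for the nearest net point $v'$. This gives $s^*\le\sqrt B/(1-\varepsilon')$, where $\varepsilon'$ is the mesh of $\mathcal M_h$. Combining, $\|\Psi(\rho)\|\le\frac1N\bigl(1+\frac{B}{(1-\varepsilon)^2}\bigr)=B_\ell/N$.

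I expect the only delicate point to be checking that $\mathcal M_h$ is indeed an $\varepsilon$-net with the same $\varepsilon=\sqrt r/h$. The same rounding argument as for \eqref{eqn:net} applies in $2N$ real coordinates and gives mesh $\le\sqrt{2N}/h$. For $N\ge3$ this is at most $\sqrt{N(N-1)}/h=\varepsilon$, and we need $h>\sqrt{2N}$ so that rounding never produces $0$. I also need to track the conjugation carefully, so that the $v$ arising from $x$ is the one whose operator $L_v$ appears in $R$. Since the net is closed under entrywise conjugation and the argument only uses unit norm, this causes no loss.
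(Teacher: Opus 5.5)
Your proposal is correct and follows the paper's proof essentially step for step. The $0<p\le 2$ case goes through the Hilbert--Schmidt bound $\Tr(\Psi(\rho)^2)\le A_\ell/N$ and monotonicity of $H_p$. The $p\ge 2$ case uses $\|FL_v^*L_vF\|\le B$ on $\mathcal M_h$, the $\varepsilon$-net extension to $\|L_vF\|^2\le B/(1-\varepsilon)^2$ for all unit $v$, and the quadratic-form estimate $\langle x|\Psi(\rho)|x\rangle\le \frac1N\|L_{\bar x}F\|^2+\frac1N$. Your extra checks fill in details the paper only asserts: that the added positive $L_v$ terms in $R$ keep $\pm T_{A_q}\le L(I+R)$ valid, and that $\mathcal M_h$ has mesh $\sqrt{2N}/h\le\sqrt{N(N-1)}/h$ for $N\ge 3$.
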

\begin{proof}
This is analogous to \eqref{eqn:hp-1} and \eqref{eqn:Hp-lower-1}.
The proof of Proposition~\ref{prop:1} and $\|F T_{A_q}F\|\le L$ give the bound for $0<p\le2$. 
So we just need to check the bound for $p\ge2$ involving $B_\ell$. Comparing to the first bound in \eqref{eqn:p-trace}, it is sufficient to show $\|\Psi(\rho)\|\le \frac{B_\ell}{N}$.
To see this holds, first note that for $v\in\mathcal M_h$ an $\varepsilon$-net, the same argument (from the definition of $R$) as in the proof of Lemma~\ref{lem:F} gives $\|FL_v^*L_vF\|\le B$. The argument in the proof of Proposition~\ref{prop:1} for $L_vF$ then gives for any unit $v$,
\begin{align}
\|L_vF\|^2=\|FL_v^*L_vF\|\le \frac{B}{(1-\varepsilon)^2}.
\end{align}
We can then estimate for $\|x\|_2=1$ and $\bar x$ the complex conjugate of $x$,
\begin{align*}
\langle x|\Psi(\rho)|x\rangle&=\frac1N\sum_{\s,\t}\bar x_\s x_\t\Tr(W_\s F\rho FW_\t^*)+\frac1N\Tr((I-F^2)\rho)\\
&= \frac1N\Tr(L_{\bar x} F\rho FL_{\bar x}^*) +\frac{1}{N}\Tr((I-F^2)\rho)\\
&\le \frac1N\|FL_{\bar x}^*L_{\bar x}F\|+\frac1N\le \frac{B}{N(1-\varepsilon)^2}+\frac1N,\numberthis\label{eqn:output-norm}
\end{align*}
which gives the result with $B_\ell$ in \eqref{eqn:ab}.
\end{proof}

The proof for small entropy on the Bell state however requires more work than for the analogous Proposition~\ref{prop:2}. We will estimate the entropy by comparing $Z:=\Psi\otimes\Psi(\rho_{12})$ to a simpler state $Z_0$ defined below.

\begin{prop}[small $\Hpmin(\Psi\otimes\Psi)$]\label{prop:2ext}
Let $\rho_{12}=|\Omega_D\rangle\langle\Omega_D|$ be the maximally entangled Bell state, for $|\Omega_D\rangle=D^{-1/2}\sum_{j=1}^D|j\rangle\otimes|j\rangle$.
Let $\Psi_0$ be the channel defined by $[\Psi_0(\rho)]_{\s,\t}=\frac1N\Tr(W_\s\rho W_\t^*)$, which corresponds to \eqref{eqn:phi2} with $F=I$.
Let $Z_0=(\Psi_0\otimes\Psi_0)(\rho_{12})$, and let $Q_1$ be as in \eqref{eqn:Q1} below. Then for $\eta N\le1/2$ and if \eqref{eqn:tau-eta} holds,
\begin{align}
H_p((\Psi\otimes\Psi)(\rho_{12}))\le \begin{cases}
\ell\log Q_1+2\cdot 2^\ell\eta^{\min(p_0,1/2)},&p_0\le p\le 2\\
H_p(Z_0)+4\eta N,&p\ge2
\end{cases}.
\end{align}
\end{prop}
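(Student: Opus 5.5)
The plan is to compare $Z:=(\Psi\otimes\Psi)(\rho_{12})$ with the undamped output $Z_0=(\Psi_0\otimes\Psi_0)(\rho_{12})$. First I bound $\|Z-Z_0\|_1$ using only \eqref{eqn:tau-eta}. Then I transfer the entropy bound from $Z_0$ to $Z$, using different continuity arguments for $p\ge2$ and for $p_0\le p\le2$.

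\emph{Step 1: closeness of $Z$ and $Z_0$.} Write $\Psi=\Lambda+\Gamma$ as in Lemma~\ref{lem:channel}, so that $\Lambda(\rho)=\Psi_0(F\rho F)$. Then
\begin{align*}
Z=(\Psi_0\otimes\Psi_0)\big((F\otimes F)\rho_{12}(F\otimes F)\big)+(\text{three terms involving }\Gamma).
\end{align*}
Since $F$ is real symmetric, $(F\otimes I)|\Omega_D\rangle=(I\otimes F)|\Omega_D\rangle$, and hence $(F\otimes F)|\Omega_D\rangle=(F^2\otimes I)|\Omega_D\rangle$. This gives
\begin{align*}
\|(F\otimes F)|\Omega_D\rangle-|\Omega_D\rangle\|^2=\tau_D((I-F^2)^2).
\end{align*}
Pointwise we have $0\le 1-(1+x)^{-2}\le \min(2x,1)$ for $x\ge0$, so $\tau_D((I-F^2)^2)\le 2\tau_D(R)\le\eta^2/32$. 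The trace distance between the two resulting (subnormalized) pure states is then $O(\eta)$, and the $\Gamma$-terms carry total mass $\le 2\Tr((I-F^2)\rho_{\mathrm{marg}})=2\tau_D(I-F^2)\le 4\tau_D(R)$. Channels contract the trace norm, so I expect $\|Z-Z_0\|_1\le\eta$ after tracking constants. In particular $\|Z-Z_0\|_p\le\eta$ for every $p\ge1$.

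\emph{Step 2: $p\ge2$.} I need a lower bound $\|Z_0\|_\infty\ge 1/N$. As in Proposition~\ref{prop:2}, this follows from $\langle\Omega_N|Z_0|\Omega_N\rangle=\frac{1}{DN^3}\Tr(B_0^2)$ with $B_0=\sum_\s W_\s W_\s^*=NI$, which gives exactly $1/N$. Hence $\|Z_0\|_p\ge 1/N$. The triangle inequality then gives
\begin{align*}
\|Z\|_p\ge\|Z_0\|_p-\eta\ge\|Z_0\|_p(1-\eta N).
\end{align*}
Using $H_p=\frac{p}{1-p}\log\|\cdot\|_p$, together with $\frac{p}{p-1}\le2$ and $-\log(1-x)\le 2x$ for $x\le1/2$ (here $\eta N\le 1/2$), this yields $H_p(Z)\le H_p(Z_0)+4\eta N$.

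\emph{Step 3: $p_0\le p\le2$, the main obstacle.} For $p<1$ the function $H_p$ is not norm-continuous uniformly in dimension, so a different argument is needed. My first attempt exploits the product structure of $Z_0$. Its entries are $\frac{1}{N^2}\tau_D(W_\t^*W_\s W_{\s'}^*W_{\t'})$ (up to index placement), and these factor over $\beta\in[\ell]$ because the different copies commute. So $Z_0=\bigotimes_{\beta=1}^\ell Z_0^{(\beta)}$, where each $Z_0^{(\beta)}$ is an explicit $K^2\times K^2$ matrix computed from words of length $\le4$ in $F_k^\gamma$; these words are nontrivial modulo $M$. I expect $Q_1$ in \eqref{eqn:Q1} to be a bound of the form $\Tr\big((Z_0^{(\beta)})^{p}\big)^{1/(1-p)}\le Q_1$ that is uniform in $p\ge p_0$, coming from the spectrum of $Z_0^{(\beta)}$: one eigenvalue $\approx1/(2\gamma)$ plus near-uniform remainder, as in \cite{zhen2026almost}. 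This would give $H_p(Z_0)\le\ell\log Q_1$.

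To pass from $Z_0$ to $Z$, I would take the eigenprojection $P$ of $Z_0$ onto its $2^\ell$ ``large'' tensor-product eigenspaces (one large/small choice per factor) and split $Z$ relative to $P$. Then I would use the operator-concavity estimate $\Tr(X+Y)^{p}\le\Tr X^{p}+\Tr Y^{p}$ for $p\le1$, with $\Tr|Y|^p\le(\mathrm{rank})^{1-p}\|Y\|_1^p$ applied on each of the $2^\ell$ pieces. I expect each piece to contribute $O(\eta^{p})$, with the exponent capped at $\min(p_0,1/2)$ after using monotonicity in $p$, which produces the error $2\cdot2^\ell\eta^{\min(p_0,1/2)}$. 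The delicate point is to control the rank of the small-eigenvalue pieces so that no $N^{1-p}$ factor appears; I would handle this by working with the tensor-factor block decomposition rather than with the full $N^2$-dimensional space.
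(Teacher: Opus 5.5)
Steps 1 and 2 are sound. In Step 1, your direct vector estimate replaces the paper's gentle-measurement bound, and the constants close: $\|Z-Z_0\|_1\le 2\sqrt{2\tau_D(R)}+4\tau_D(R)\le \eta/(2\sqrt2)+\eta^2/16<\eta$. Step 2 is the paper's argument. The only change is that you get $\|Z_0\|_\infty\ge 1/N$ from the Bell overlap rather than from Lemma~\ref{lem:bell0} at $p=\infty$, and either works.

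The gap is in Step 3, which rests on a misidentified mechanism. The small-$p$ bound is a \emph{rank} bound. Lemma~\ref{lem:bell0} gives $\rank Z_0\le Q_1^\ell$, and $Q_1$ is exactly the number of nonzero entries of $\mathfrak b_1$, whose top entry is $1/K$, not $\approx1/(2\gamma)$. One large eigenvalue plus a near-uniform remainder on all of $\C^{K^2}$ could never give $e^{H_p}\le Q_1<K^2$ uniformly as $p\to0$. What matters is the kernel of $Z_0$, which comes from the sign pairing in $\xi_\beta$.

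So you should split along the support projection $P_0$ of $Z_0$, not along eigenspaces with large eigenvalues. With a large-eigenvalue projection $P$, the complement $P^\perp$ still carries most of $\supp Z_0$; for instance, $\Tr(P^\perp Z_0)=1-1/K$ if only the top eigenvalue is kept. Then $\Tr(P^\perp Z)$ is not $O(\eta)$, and your expectation that each piece contributes $O(\eta^p)$ fails for every piece meeting $\supp Z_0$. With $P_0$ instead, $\Tr(P_0^\perp Z)=\Tr(P_0^\perp(Z-Z_0))\le\eta$.

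Next, $\Tr(X+Y)^p\le\Tr X^p+\Tr Y^p$ needs $X,Y\ge0$, so it does not apply to a block splitting of $Z$, whose off-diagonal blocks are not positive. Use pinching instead. The eigenvalues of $Z$ majorize those of $P_0ZP_0+P_0^\perp ZP_0^\perp$, and $x\mapsto x^{p_0}$ is concave, so the rank bound gives $\Tr Z^{p_0}\le Q_1^{\ell(1-p_0)}+N^{2(1-p_0)}\eta^{p_0}$.

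Finally, the $N^{2(1-p_0)}$ factor you try to avoid is harmless; it is exactly where the $2^\ell$ in the error term comes from. Since $Q_1\ge K^2/2$, we have $N^2=K^{2\ell}\le 2^\ell Q_1^\ell$, hence $\Tr Z^{p_0}\le Q_1^{\ell(1-p_0)}(1+2^\ell\eta^{p_0})$. Then $\frac{1}{1-p_0}\log(1+2^\ell\eta^{p_0})\le 2\cdot 2^\ell\eta^{p_0}$ for $p_0\le1/2$, which gives the stated bound. No tensor-factor block decomposition is needed; at best it reproduces this same estimate with different bookkeeping.
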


We can estimate $H_p(Z_0)$ using Lemma~\ref{lem:bell0} below, which follows from the $\ell=1$ case in \cite[Corollary 4.5]{zhen2026almost}, and makes use of the signs $\xi_\beta\in\{\pm1\}$. We will also use part of the lemma in the proof of Proposition~\ref{prop:2ext}.
\begin{lem}\label{lem:bell0} 
Consider $[F_k^\gamma]^\ell$, let $K=2\gamma k$, and let $\Psi_0$ be the channel defined by $[\Psi_0(\rho)]_{\s,\t}=\frac1N\Tr(W_\s\rho W_\t^*)$, which corresponds to \eqref{eqn:phi2} with $F=I$.
Let $Z_0:=(\Psi_0\otimes\Psi_0)(\rho_{12})$, and define 
\begin{align}\label{eqn:spectrum}
\mathfrak b_1=\left(\frac1K,\left(\frac{2}{K^2}\right)^{\times \frac{\gamma-1}{2\gamma}K^2},\left(\frac{1}{K^2}\right)^{\times \frac{K^2}{\gamma}-K}\right),
\end{align}
and
\begin{align}\label{eqn:Q1}
Q_1:=\frac{\gamma+1}{2\gamma}K^2-K+1\ge \frac{K^2}{2}.
\end{align}
Then for any $0\le p\le\infty$,
\begin{align}\label{eqn:z0-bounds}
\rank Z_0\le Q_1^\ell,\quad\text{and}\quad H_p(Z_0)\le \ell H_p(\mathfrak b_1).
\end{align}
\end{lem}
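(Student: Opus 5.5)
We will compute $Z_0$ explicitly, show that it factorizes as a tensor product over $\beta\in[\ell]$, and then read off the spectrum of a single factor. This reduces everything to the case $\ell=1$, where the spectrum turns out to be exactly $\mathfrak b_1$, so the claims in \eqref{eqn:z0-bounds} in fact hold with equality.

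\textbf{Step 1: matrix entries of $Z_0$.} Expand $|\Omega_D\rangle\langle\Omega_D|=\frac1D\sum_{a,b}|a\rangle\langle b|\otimes|a\rangle\langle b|$ and use that every $W_\s$ is a real permutation matrix. Exactly as in the proof of Proposition~\ref{prop:2}, this gives
\begin{align*}
\langle \s\s'|Z_0|\t\t'\rangle=\frac{1}{N^2}\,\tau_D\big(W_\t^*W_\s W_{\s'}^*W_{\t'}\big)=\frac1{N^2}\oneb\big[g_\t^{-1}g_\s g_{\s'}^{-1}g_{\t'}=e \text{ in } G\big].
\end{align*}
The second equality uses $\tau_D(\lambda_G(g))=\oneb_{g=e}$. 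In each factor $F_k^{(\alpha,\beta)}$ the word $g_\t^{-1}g_\s g_{\s'}^{-1}g_{\t'}$ has length at most $4\le 4m$. By the choice of $M$ in \eqref{eqn:M-ext}, it is therefore trivial in $G$ if and only if it is trivial in $[F_k^\gamma]^\ell$. Hence $Z_0=\frac1{N^2}\mathcal E$, where $\mathcal E$ is the $0/1$ matrix of the equivalence relation $(\s,\s')\sim(\t,\t')\iff g_\s g_{\s'}^{-1}=g_\t g_{\t'}^{-1}$ in the free product group.

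\textbf{Step 2: tensor factorization.} The group $[F_k^\gamma]^\ell$ is a direct product over $\beta$, and $g_\s=\prod_\beta g_{s_\beta}$ with the $\beta$-th factor lying in the $\beta$-th block. So the relation $\sim$ holds if and only if $g_{s_\beta}g_{s'_\beta}^{-1}=g_{t_\beta}g_{t'_\beta}^{-1}$ for every $\beta$. Under the identification $(\C^K)^{\otimes 2\ell}\cong\bigotimes_\beta(\C^K\otimes\C^K)$, this gives $Z_0=\bigotimes_{\beta=1}^\ell Z_0^{(1)}$, where $Z_0^{(1)}=\frac1{K^2}\mathcal E_1$ is the $\ell=1$ object. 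Rank is multiplicative and $H_p$ is additive under tensor products for every $p\in[0,\infty]$. It therefore suffices to show that $Z_0^{(1)}$ has spectrum $\mathfrak b_1$ (padded with zeros) and rank $Q_1$.

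\textbf{Step 3: the $\ell=1$ spectrum.} The matrix $\mathcal E_1$ is, after permuting indices, block diagonal with all-ones blocks, one block for each value $h=xy^{-1}$. Here $x,y$ range over the $K$ letters $g_{j,\alpha}^{\pm1}$. A block of size $c$ contributes a single eigenvalue $c/K^2$. We count the classes:
\begin{itemize}
\item[(i)] $h=e$ arises exactly when $x=y$. This is one class of size $K$, giving eigenvalue $1/K$.
\item[(ii)] Take $x\neq y$ in the same copy $\alpha$. In the free group $F_k$, a reduced product $xy^{-1}$ of two letters from $\{g_j^{\pm1}\}$ determines $(x,y)$. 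So these classes are singletons, and there are $\gamma\cdot 2k(2k-1)=K^2/\gamma-K$ of them.
\item[(iii)] Take $x,y$ in distinct copies. Then $x$ and $y$ commute, so $(x,y)$ and $(y^{-1},x^{-1})$ give the same $h$, and these two pairs are distinct. Each class has size exactly $2$, and there are $\frac12\gamma(\gamma-1)(2k)^2=\frac{\gamma-1}{2\gamma}K^2$ of them.
\end{itemize}
This is precisely $\mathfrak b_1$. Adding up the class counts gives $\rank Z_0^{(1)}=1+\frac{\gamma-1}{2\gamma}K^2+\frac{K^2}{\gamma}-K=Q_1$. Finally, $Q_1\ge K^2/2$ follows from $\frac{K^2}{2\gamma}=2\gamma k^2\ge K-1$.

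\textbf{Main obstacle.} The step needing the most care is the multiplicity count in (ii) and (iii): one must confirm that no further coincidences $xy^{-1}=x'y'^{-1}$ occur in the free product, and that the passage from $G$ to the free group in Step 1 is legitimate. The latter needs only the word length bound $4\le 4m$ per factor.
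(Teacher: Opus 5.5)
Your proposal is correct, and it is more self-contained than the paper's argument. The paper handles the case $\ell=1$ by citing \cite[Corollary 4.5]{zhen2026almost}. It then notes that $\ell^2(G)$, the $W_\s$, $\Psi_0$, $|\Omega_D\rangle$ and hence $Z_0$ factor as tensor products over $\beta$, and checks $Q_1\ge K^2/2$ via $\frac{1}{2\gamma}K^2-K+1=2\gamma k(k-1)+1>0$.

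You keep the same tensor reduction, carried out on the explicit $0/1$ relation matrix rather than on the channel, which is equivalent. In place of the citation you compute directly: $Z_0=N^{-2}\mathcal E$, with $\mathcal E$ the matrix of the relation $g_\s g_{\s'}^{-1}=g_\t g_{\t'}^{-1}$, followed by the class count (i)--(iii). That count is right:
\begin{itemize}
\item In (ii), $xy^{-1}$ is a reduced word of length exactly $2$, so it determines $(x,y)$.
\item In (iii), the element has nontrivial components in exactly two copies, which forces the pair to be $(x,y)$ or $(y^{-1},x^{-1})$.
\item The three types produce pairwise distinct elements.
\end{itemize}

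Your route buys three things. It is independent of the external reference. It gives the stronger conclusion that the spectrum of $Z_0^{(1)}$ is exactly $\mathfrak b_1$, so both bounds in \eqref{eqn:z0-bounds} are equalities. And it explicitly justifies why the finite-dimensional $W_\s$ reproduce the free computation: only words of length at most $4$ per factor occur, so any $m\ge1$ in \eqref{eqn:M-ext} suffices. The paper's one-line citation leaves this point implicit; what the paper's route buys is brevity.

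One terminological slip: $[F_k^\gamma]^\ell$ is a direct product of free groups, not a ``free product''. Your step (iii) correctly uses that letters from different copies commute, which is exactly what would fail in a genuine free product.
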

\begin{proof}
In the case $\ell=1$, \eqref{eqn:z0-bounds} follows from \cite[Corollary 4.5]{zhen2026almost}. The case $\ell\ge1$ then follows since $\ell^2(G)$, $W_\s$, $\Psi_0$ (e.g. check on product inputs), $|\Omega_D\rangle$, and $Z_0$ factor as tensor products.
For the inequality in \eqref{eqn:Q1}, using $K=2\gamma k$, note that $\frac{1}{2\gamma}K^2-K+1=2\gamma k(k-1)+1>0$.
\end{proof}

\begin{proof}[Proof of Proposition~\ref{prop:2ext}]
The first goal is to show that $Z:=(\Psi\otimes\Psi)(\rho_{12})$ does not differ that much from $Z_0=(\Psi_0\otimes\Psi_0)(\rho_{12})$, i.e. that the damping term $F$ stays similar enough to the identity, even with the new changes compared to Theorem~\ref{thm:main}.

Decompose the channel $\Psi$ into $\Lambda+\Gamma$ as in Lemma~\ref{lem:channel}, with $[\Lambda(\rho)]_{\s\t}=\frac1N\Tr(W_\s F\rho FW_\t^*)$. Letting $Z_F:=(\Lambda\otimes\Lambda)(\rho_{12})$, we will compare the trace norms $\|Z-Z_0\|_1\le \|Z-Z_F\|_1+\|Z_F-Z_0\|_1$.
Note that $\Lambda(\rho)=\Psi_0(F\rho F)$, and so
\begin{align*}
Z_F=(\Lambda\otimes\Lambda)(\rho_{12})=(\Psi_0\otimes\Psi_0)((F\otimes F)\rho_{12}(F\otimes F)).
\end{align*}
Since $\Psi_0\otimes\Psi_0$ is a channel, we have $\|Z_F-Z_0\|_1\le \|(F\otimes F)\rho_{12}(F\otimes F)-\rho_{12}\|_1$ \cite[Corollary 3.40]{watrous2018book}.
As $\Tr(\rho_{12}(F^2\otimes F^2))=\tau_D(F^4)$ since $F$ is real and self-adjoint, the gentle measurement lemma \cite{winter1999coding} gives
\begin{align}
\|Z_F-Z_0\|_1&\le \sqrt{8}\sqrt{1-\tau_D(F^4)}\le 2\sqrt{8}\sqrt{\tau_D(R)},
\end{align}
as $I-F^4=I-(I+R)^{-4}\le 4R$.
Since $Z-Z_F\ge0$ (cf. proof of Lemma~\ref{lem:channel}), we see $\|Z-Z_F\|_1=\Tr(Z-Z_F)=1-\tau_D(F^4)$, and obtain
\begin{align}\label{eqn:zdiff}
\|Z-Z_0\|_1\le (1+\sqrt{8})\sqrt{1-\tau_D(F^4)}\le 2(1+\sqrt{8})\sqrt{\tau_D(R)}<\eta,
\end{align}
using \eqref{eqn:tau-eta} for the last inequality.

Now we apply \eqref{eqn:zdiff} to prove the entropy bounds. 
\begin{itemize}[leftmargin=*]
\item For $p_0\le p\le2$, by monotonicity of R\'enyi-$p$ entropies, it suffices to prove an upper bound on $H_{p_0}((\Psi\otimes\Psi)(\rho_{12}))$. Without loss of generality we may assume $0<p_0\le1/2$, and simply replace $p_0$ by $\min(p_0,1/2)$ at the end.
For $Z_0$, we have $H_p(Z_0)\le H_0(Z_0)=\log\rank Z_0\le \ell\log Q_1$ by Lemma~\ref{lem:bell0}. To transfer this to a bound for $Z$, we show that most of the mass for $Z$ lies in the support $P_0$ of $Z_0$ (even if the rank of $Z$ becomes large). By Schatten H\"older's inequality and \eqref{eqn:zdiff}, we have for $P_0,P_0^\perp$ also denoting the orthogonal projection onto $\supp Z_0$ and $(\supp Z_0)^\perp$ respectively,
\begin{align}\label{eqn:porth}
\Tr(P_0^\perp Z)=\Tr(P_0^\perp(Z-Z_0))\le \|Z-Z_0\|_1\le\eta.
\end{align}
Let $\mathcal P(Z):=P_0ZP_0+P_0^\perp ZP_0^\perp$, which is a pinching channel.
Applying \cite[Proposition 4.6, Theorem 4.32]{watrous2018book} shows that the eigenvalues of $Z$ majorize those of $\mathcal P(Z)$.
Since $0<p_0<1$, majorization ($\lambda(\mathcal P(Z))=D\lambda(Z)$ for some doubly stochastic matrix $D$) and Jensen's inequality, followed by noting $\mathcal P(Z)=P_0ZP_0\oplus P_0^\perp ZP_0^\perp$ is block diagonal, give
\begin{align}
\Tr Z^{p_0}&\le\Tr \mathcal P(Z)^{p_0}= \Tr(P_0ZP_0)^{p_0}+\Tr (P_0^\perp ZP_0^\perp)^{p_0}.
\end{align}
For $X\ge0$ with $\rank X\le r$, H\"older's inequality also implies for $0<p_0<1$, $\Tr X^{p_0}\le r^{1-p_0}(\Tr X)^{p_0}$; thus with \eqref{eqn:porth} we obtain
\begin{align*}
\Tr Z^{p_0}&\le Q_1^{\ell(1-p_0)}+N^{2(1-p_0)}\eta^{p_0}\\
&\le Q_1^{\ell(1-p_0)}(1+2^\ell\eta^{p_0}),\numberthis
\end{align*}
using that $N^2=K^{2\ell}\le 2^\ell Q_1^\ell$ by \eqref{eqn:Q1}. Thus for $0<p_0\le1/2$,
\begin{align}
H_{p_0}(Z)&\le \ell\log Q_1+\frac{1}{1-p_0}\log(1+2^\ell\eta^{p_0})\le \ell\log Q_1+2\cdot 2^\ell\eta^{p_0}.
\end{align}

\item For $p\ge2$, recall the Schatten-$p$ norms are $\|X\|_p=\left(\sum_{j=1}^{N^2}\sigma_j(X)^p\right)^{1/p}$ for $\sigma_j(X)$ the singular values of $X$, and that $H_p(X)=\frac{p}{1-p}\log\|X\|_p$. Using \eqref{eqn:zdiff} we can estimate
\begin{align*}
\|Z\|_p&\ge\|Z_0\|_p-\|Z-Z_0\|_p\ge \|Z_0\|_p-\eta.
\end{align*}
Lemma~\ref{lem:bell0} with $p=\infty$ gives $\|Z_0\|_\infty\ge1/N$, so $\|Z_0\|_p\ge\|Z_0\|_\infty\ge 1/N$, and $\frac{\eta}{\|Z_0\|_p}\le\eta N\le\frac12$. Using $-\log(1-x)\le 2x$ then implies
\begin{align}
H_p(Z)=\frac{p}{1-p}\log\|Z\|_p&\le\frac{p}{1-p}\log(\|Z_0\|_p-\eta)\le H_p(Z_0)+4\eta N,
\end{align}
as desired.
\end{itemize}
\end{proof}

\subsection{Parameter choices and completion of the proof of Theorem~\ref{thm:ext}}\label{subsec:param-ext}

From Propositions~\ref{prop:1ext} and \ref{prop:2ext}, we obtain
\begin{align}\label{eqn:entropy-diff}
2\Hpmin(\Psi)-\Hpmin(\Psi\otimes\Psi)&\ge\begin{cases}2\log N-2\log A_\ell-\ell\log Q_1-2\cdot 2^\ell\eta^{\min(1/2,p_0)},&p_0\le p\le 2\\
2\log N-\frac{2}{p-1}\log A_\ell-\frac{2(p-2)}{p-1}\log B_\ell-H_p(Z_0)-4\eta N,&p\ge2
\end{cases}.
\end{align}
We want to choose parameters so the lower bound on the right hand side is of order $\ell$.
Recall $0<\eta\le\frac{1}{2N}$ for Proposition~\ref{prop:2ext}, and that we need choose $m$ so the bound \eqref{eqn:tau-eta} $\tau_D(R)\le\frac{1}{64}\eta^2$ holds. We do this at the very end after choosing the other parameters as follows, making no attempt to optimize values.
\begin{itemize}
\item Recall $N=(2\gamma k)^\ell$. We start by taking $\gamma=2^4=16$ and $k=2^{12}=4096$; we want to choose these fairly small to have smaller Hilbert space dimensions, but they are fairly arbitrary as long as they are large enough.
For example, this choice of $\gamma$ gives close to the leading order value of $Q_1$ in $\gamma$.
With these choices, we also have $N=2^{17\ell}$, $K=2\gamma k=2^{17}$, and
\begin{align}
J^2=(K+16\gamma^2)^\ell-K^\ell=N\left[\left(\frac{33}{32}\right)^\ell-1\right].
\end{align}

\item We next take $L$ and $B$ conveniently as follows, and assert the goal for $\varepsilon$:
\begin{align}
L=2J,\quad B=2(4\gamma)^\ell=2\cdot 64^\ell,\quad \varepsilon\le \frac18.
\end{align}
The nets $\mathcal N_h$ and $\mathcal M_h$ from the definition \eqref{eqn:R-ext} of $R$ are both $\varepsilon=\sqrt{r}/h$-nets, where $r=N(N-1)$. So we see we can take $h=8N$. 
With these choices of $L,B,\varepsilon$, the parameters $A_\ell,B_\ell$ from Proposition~\ref{prop:1ext} have bounds
\begin{align}
A_\ell &\le1+4\left[\left(\frac{33}{32}\right)^\ell-1\right]\frac{64}{49}\le 6\left(\frac{33}{32}\right)^\ell,\quad B_\ell\le1+\frac{128}{49}64^\ell<4\cdot 64^\ell.
\end{align}

\item We choose $\eta=2^{-t}$ small so that the terms $2^\ell\eta^{\min(p_0,1/2)}$ and $N\eta$ from \eqref{eqn:entropy-diff} are both $\le\frac{1}{128}=2^{-7}$. 
Since $N=2^{17\ell}$, we can choose
\begin{align}
\eta=2^{-t},\quad t=t(\ell,p_0)=\left\lceil\max\left\{\frac{\ell+7}{\min(p_0,1/2)},17\ell+7\right\}\right\rceil.
\end{align}

\item Finally, we choose $m$ so the bound $\tau_D(R)\le\frac{1}{64}\eta^2$ from \eqref{eqn:tauR-ext} holds. We can bound the net sizes $|\mathcal N_h|+|\mathcal M_h|$ by
\begin{align}
(2h+1)^r+(2h+1)^{2N}=(2^{17\ell+4}+1)^{N(N-1)}+(2^{17\ell+4}+1)^{2N}\le 2(2^{17\ell+5})^{N^2}=:S.
\end{align}
Since $h=2^{17\ell+3}$, $J/L=1/2$ and $(4\gamma)^\ell/B=1/2$, we see we can take
\begin{align}\label{eqn:mpl-ext}
m=\left\lceil\log_4\frac{64S}{\eta^2}\right\rceil
=3+t(\ell,p_0)+1+(17\ell+5)2^{34\ell-1},
\end{align}
which gives $n=16m+8m\log_2k=112m$ and input Hilbert space dimension from \eqref{eqn:G-ext},
\begin{align*}
D=(3\cdot 2^{336m-2})^{16\ell},
\end{align*}
in agreement with the hypotheses in Theorem~\ref{thm:ext}.

\end{itemize}

Now that we have chosen parameters, we estimate the entropy gap in \eqref{eqn:entropy-diff}. From \eqref{eqn:Q1} and $K=2^{17}$, we have
\begin{align}
Q_1&=\frac{\gamma+1}{2\gamma}K^2-K+1\le \frac{17}{32}K^2=17\cdot 2^{29}.
\end{align}
Starting with $p_0\le p\le2$, \eqref{eqn:entropy-diff} then gives
\begin{align*}
2\Hpmin(\Psi)-\Hpmin(\Psi\otimes\Psi)&\ge 34\ell\log 2-2\ell\log\frac{33}{32}-2\log6-\ell\log(17\cdot 2^{29})-\frac{1}{64}\\
&> \left(\frac{4}{5}\ell-6\right)\log 2,\numberthis
\end{align*}
as desired.

For $p\ge2$, the estimates \cite[Eqs.~(53),(55)]{zhen2026almost} for $H_p(\mathfrak b_1)$, multiplied by a factor of $\ell$ from Lemma~\ref{lem:bell0}, give for $K=2^{17}$,
\begin{align}
H_p(Z_0)\le\ell H_p(\mathfrak b_1)&\le 34\ell\log2-\frac{\ell}{p-1}\left[\log4-\frac34\log3+\frac14\log\frac{15}{8}+13(p-2)\log2\right].
\end{align}
Using this in \eqref{eqn:entropy-diff}, we thus obtain
\begin{align*}
2\Hpmin(\Psi)-\Hpmin(\Psi\otimes\Psi)&\ge
\begin{multlined}[t]
34\ell\log2-\frac{2}{p-1}\left[\ell\log\frac{33}{32}+\log6\right]\\
-\frac{2(p-2)}{p-1}\left[\log4+\ell\log64\right]-\frac{1}{32}-H_p(Z_0)
\end{multlined}\\
&\ge \begin{multlined}[t]\frac{1}{p-1}\left[\ell\left(-2\log\frac{33}{32}+\log4-\frac34\log3+\frac14\log\frac{15}{8}\right)-2\log6\right]\\
+\frac{p-2}{p-1}\left[-4\log2+\ell\log2\right]-\frac{1}{32}
\end{multlined}\\
&\ge \frac{3}{5}\ell-2\log6-\frac{1}{32}>\left(\frac45\ell-6\right)\log2, \numberthis
\end{align*}
which completes the proof of Theorem~\ref{thm:ext}. \qed

\vspace{2mm}
\noindent
\textbf{Acknowledgments.}
This project used GPT-6 Astra for coming up with the proof method, as well as for general checking and proofreading. We acknowledge support from the U.S.~Department of Energy, Office of Science, Accelerated Research in Quantum Computing, Fundamental Algorithmic Research toward Quantum Utility (FAR-Qu). We were also supported in part by the DoE ASCR Quantum Testbed Pathfinder program (award No.~DE-SC0024220), ONR MURI, NSF QLCI (award No.~OMA-2120757), NSF STAQ program, AFOSR MURI, ARL (W911NF-24-2-0107), and NQVL:QSTD:Design:FTL. We also acknowledge support from the U.S.~Department of Energy, Office of Science, National Quantum Information Science Research Centers, Quantum Systems Accelerator (award No.~DE-SCL0000121). 

\vspace{2mm}
\noindent
\textit{Note added.} 
After posting v1 and during the final stages of preparing v2, we learned of independent work by H.-C. Cheng and P. Wu \cite{cheng2026explicit}, which constructs a similar violation to additivity of minimum output von Neumann entropy.

\bibliographystyle{amsalpha_edit}
\bibliography{moe.bib}

\end{document}